\newcommand{\Mod}[1]{\ (\mathrm{mod}\ #1)}
\newcommand{\Bin}[1]{ {\textrm{BIN}(#1)}}
\renewcommand*\env@matrix[1][*\c@MaxMatrixCols c]{%
  \hskip -\arraycolsep
  \let\@ifnextchar\new@ifnextchar
  \array{#1}}
\begin{document}

\preprint{APS/123-QED}
\title{Tridiagonal matrix decomposition for Hamiltonian simulation on a quantum computer}

\author{Boris Arseniev} \affiliation{Skolkovo Institute of Science and
Technology, Moscow, Russian Federation}
\author{Dmitry Guskov}\affiliation{Skolkovo Institute of Science and
Technology, Moscow, Russian Federation}
\author{Richik Sengupta}\affiliation{Skolkovo Institute of Science and
Technology, Moscow, Russian Federation}
\author{Jacob Biamonte}\thanks{Former address}\affiliation{Skolkovo Institute of Science and Technology, Moscow, Russian Federation}
\author{Igor Zacharov}\affiliation{Skolkovo Institute of Science and
Technology, Moscow, Russian Federation}

\begin{abstract}

The construction of quantum circuits to simulate Hamiltonian evolution is central to many quantum algorithms.
State-of-the-art circuits are based on oracles whose implementation is often omitted, and the complexity of the algorithm is estimated by counting oracle queries.
However, in practical applications, an oracle implementation contributes a large constant factor to the overall complexity of the algorithm.
The key finding of this work is the efficient procedure for representation of a tridiagonal matrix in the Pauli basis, which allows one to construct a Hamiltonian evolution circuit without the use of oracles.
The procedure represents a general tridiagonal matrix $2^n \times 2^n$ by systematically determining all Pauli strings present in the decomposition, dividing them into commuting subsets. The efficiency is in the number of commuting subsets $O(n)$. 
The method is demonstrated using the one-dimensional wave equation, verifying numerically that the gate complexity as a function of the number of qubits is lower than the oracle based approach for $n < 15$ and requires half the number of qubits. 
This method is applicable to other Hamiltonians based on the tridiagonal matrices.

\end{abstract}

\maketitle

\section{\label{sec:level1}Introduction}

Simulation of quantum Hamiltonians is one of the first directions that can potentially demonstrate quantum advantage \cite{feynman2018simulating}. Over the years this field has evolved and great progress has been made since the first description \cite{lloyd1996universal}. The latest developed techniques, such as the quantum walk algorithm \cite{Berry}, simulation by qubitization \cite{Low2019hamiltonian}, and others \cite{childs2003exponential, berry2015simulating, berry2016corrected, low2017optimal}, are often described in terms of calls to an oracle, but the construction of the oracle is usually not specified. In a notable exception from the above, the oracle is constructed in Ref.~\cite{suau2021practical} and the constant factor in gate scalability is calculated.

The standard approach to implement a circuit for Hamiltonian simulation on a quantum computer is to decompose the Hamiltonian $H$ into a sum of Pauli strings (tensor product of Pauli matrices) and approximate the operator $e^{-\imath Ht}$ with product formulas \cite{ Trotter, Childs_Trotter}. With a naive approach, the number of Pauli strings to consider is $4^n$ if the size of $H$ is $2^n \times 2^n$. It was conjectured that modeling $e^{-\imath Ht}$ using this approach is more efficient if one can group the resulting Pauli strings into commuting sets \cite{Kawase, Berg, mukhopadhyay2023synthesizing}. It was shown that the set of all Pauli operators (without identity) of size $4^n -1$ can be divided into $2^n + 1$ different subsets, each consisting of $2^n - 1$ internally commuting elements \cite{unbiased}.

The problem of partitioning a Hamiltonian decomposition in the Pauli basis featuring sets of commuting operators was studied in the framework of simultaneous measurements \cite{Izmaylov}. Typically, this problem may be solved by building a graph with Pauli strings as nodes, connected if they commute, i.e., the clique problem. Further, it can be reduced to the graph-coloring problem which is NP-Complete, but heuristics exist \cite{Cliques}. 

In this work we consider Hamiltonians of a special kind which are constructed using tridiagonal matrices. Tridiagonal matrices come to light in many different areas of mathematical and applied sciences, commonly in the discretization of differential equations \cite{band, LeVeque}, and are used to represent discretized versions of differential operators in quantum computing.

The proposed procedure decomposes tridiagonal matrices into $O(n)$ internally commuting subsets of Pauli strings, each subset having size $O(2^n)$. It also provides the coefficients (weights) for each Pauli string in the decomposition. It automatically leverages the structure of the tridiagonal matrix to remove the majority of the redundant Pauli strings with zero coefficients and provides an upper bound for the number of Pauli strings with non-zero weights. Moreover, it contains a formula for calculation of the weights separate from the symbolic generation of Pauli strings.

We illustrate our method using the Hamiltonian of the one-dimensional wave equation as an example and numerically show the dependence of the number of gates on the number of qubits. We also show that our method for $n<15$ qubits has fewer gates for practical applications than one with the oracle implementation, despite worse theoretical scaling.

This work is organized as follows. In Sec. \ref{sec:definitions} we introduce the notation and useful mathematical constructs. The decomposition algorithm for an arbitrary tridiagonal matrix is described in Section \ref{sec:tridiag_decomp}, followed by specific variants for real and real symmetric tridiagonal matrices. Sec. \ref{sec:H_decomposition} is a special case of a symmetrized matrix $H$ constructed from a real matrix $B$ such that both $B$ and $B^\top$ are on the anti-diagonal. In Sec. \ref{sec:ham_sim} we focus on Hamiltonian simulation, while in Sec. \ref{sec:dif_eq} the method is illustrated with an example of the one-dimensional wave equation. We defer longer proofs to the Appendix.

\section{Notation and definitions} \label{sec:definitions}

The Pauli matrices constitute a basis for the complex vector space of $2 \times 2$ matrices and comprise operators $\mathcal {P}=\{ I,X,Y,Z\}$, where 
\begin{equation*}
\begin{gathered}
    I =
    \begin{pmatrix}
        1 & 0 \\
        0 & 1
    \end{pmatrix}, 
    X = 
    \begin{pmatrix}
        0 & 1 \\
        1 & 0
    \end{pmatrix},\;\
    Y = 
    \begin{pmatrix}
        0 & -\imath \\
        \imath & 0
    \end{pmatrix},\;\;
    Z = 
    \begin{pmatrix}
        1 & 0 \\
        0 & -1
    \end{pmatrix}.
\end{gathered}
\end{equation*}

A tensor product of several Pauli matrices is called a Pauli string. The length of a Pauli string is the number of Pauli operators in the string, and it is exactly $n$ when decomposing a $2^n \times 2^n$ matrix. We denote the set of Pauli strings of length $n$ as $\mathcal{P}_n$.

The Pauli basis decomposition of an arbitrary matrix $B$ is given by
\begin{equation} 
    B = \frac{1}{2^n}\sum_{j=1}^{M} \alpha_j P_j \;, \;\; P_j  \in \mathcal{P}_n,
    \label{decomp_B}
\end{equation}
where $M$ is the number of terms in the decomposition and $\alpha_j \in \mathbb{C}$ in general. Hereinafter we omit the tensor sign when writing Pauli strings (e.g.,\ $X\otimes Y\otimes Z\otimes Y$ is abbreviated as $XYZY$).

Manipulation of Pauli strings is possible with bit arithmetic. For a single bit $x, p \in \mathbb{B} = \{ 0, 1\} $ we use the following notation for powers:
\begin{equation*}
    x^p = x \oplus \overline{p} = x \oplus p \oplus 1,
\end{equation*}
where $\oplus$ is XOR (addition modulo 2). Definitions for strings of bits 
$\mathbf{x} \equiv (x_1,\dots,x_n) \in \mathbb{B}^n$, where $x_j \in \mathbb{B}, j=1,\dots,n$, are summarized in Table \ref{tab:definitions_bit_strings}.

\begin{table}[h]
\renewcommand{\arraystretch}{1.7} 
    \centering
    \begin{tabular}{c|l}
    Notation & Definition \\
    \hline
	$ \overline{\mathbf{x}} $ & Negation, $\overline{\mathbf{x}} = (\overline{x}_1,\dots,\overline{x}_n)$ \\
    \hline
    $\mathbf{x}^\mathbf{y}$ & Exponentiation, $\mathbf{x}^\mathbf{y}=(x_1^{y_1},\dots,x_n^{y_n})$ \\
    \hline
    $\mathbf{x} \cdot \mathbf{y}$ & Inner product, $\mathbf{x} \cdot\mathbf{y} = \sum_{l=1}^n x_ly_l$ \\
    \hline
    $\delta(\mathbf{x},\mathbf{y})$ & Kronecker delta, $\delta(\mathbf{x},\mathbf{y}) = 
    \prod_{j=1}^n\delta_{x_j,y_j}$ \\
    \hline
	
    $ Z^\mathbf{y} $ & $Z^\mathbf{y} \equiv \bigotimes_{l=1}^n Z^{y_l} = Z^{y_1}\otimes\cdots\otimes Z^{y_n}$ 
    \end{tabular}
    \caption{Notational convention. Here $\mathbf{x} = (x_1,\dots,x_n) \in \mathbb{B}^n$, $\mathbf{y} = (y_1,\dots,y_n) \in\mathbb{B}^n$, and $X$ and $Z$ are Pauli matrices.}
    \label{tab:definitions_bit_strings}
\end{table}

For $\mathbf{x} \in \mathbb{B}^n$ we define the vector $\ket{\mathbf{x}} \in (\mathbb{C}^2)^{\otimes n}$ as
\begin{equation}
	\ket{\mathbf{x}} = \bigotimes_{l=1}^n \ket{x_l} = \ket{x_1,\dots x_n}.
\end{equation}
Further, we define the function that converts non-negative integers to binary
\begin{equation}
 \textrm{BIN}:\mathbb{N}\cup\{0\} \rightarrow \mathbb{B}^{n}.
\end{equation}
Note that $\textrm{BIN}$, as well as the bit string $\mathbf{x}$ are encodings of an integer, where the leftmost bit encodes the lower register. For example, the string $1101$ is the encoding of $1\cdot1+ 1\cdot2+ 0\cdot4 + 1\cdot8 = 11$.

To express Pauli strings with $X$ and $Z$ matrices we use bit strings $\mathbf{z}, \mathbf{x}, \mathbf{p} \in \mathbb{B}^n$ and definitions in Table \ref{tab:definitions_bit_strings}:
\begin{align}
	Z^\mathbf{z}\ket{\mathbf{p}} &= \bigotimes_{l=1}^n Z^{z_l}\ket{p_l} = (-1)^{\mathbf{z} \cdot \mathbf{p}}\ket{\mathbf{p}}, \label{eq:action_Z_on_ket} \\
	X^\mathbf{x}\ket{\mathbf{p}} &= \bigotimes_{l=1}^n X^{x_l}\ket{p_l} = \ket{\overline{\mathbf{p}}^\mathbf{x}}. \label{eq:action_X_on_ket}
\end{align}

An arbitrary Pauli string can be defined as the image of the extended Pauli string operator (Walsh function) \\
$\hat{W}:\mathbb{B}^n\times \mathbb{B}^n\rightarrow \mathcal{P}_n$ as follows:
\begin{equation}
\hat{W}(\mathbf{x},\mathbf{z}) = \imath^{\mathbf{x}\cdot\mathbf{z}} X^{\mathbf{x}}Z^{\mathbf{z}} = \bigotimes_{j=1}^{n}\imath^{x_jz_j}X^{x_j}Z^{z_j},
\end{equation}
with the ordinary matrix product between $X^\mathbf{x}$ and $Z^\mathbf{z}$. It can be seen that the Walsh function is bijective. Thus, each Pauli string can be encoded with a unique pair $(\mathbf{x},\mathbf{z})$ and \eqref{decomp_B} can be rewritten as
\begin{equation}\label{eq: decomposition}
    B = \frac{1}{2^n}\sum_{\mathbf{x},\mathbf{z}\in\mathbb{B}^n}\beta_{\mathbf{x},\mathbf{z}}\hat{W}(\mathbf{x},\mathbf{z}),
\end{equation}
where $\beta_{\mathbf{x},\mathbf{z}} \in \mathbb{C}$. There is one-to-one correspondence between $\alpha_j$ from \eqref{decomp_B} and $\beta_{\mathbf{x},\mathbf{z}}$ in \eqref{eq: decomposition}.

By $\{P_1,\;P_2\}^{\otimes{n}}$ we denote the $n$th Cartesian product of the set $\{P_1,\;P_2\}$ with elements interpreted as Pauli strings. For example, $\{P_1,\;P_2\}^{\otimes{2}} \equiv \{P_1,\;P_2\} \otimes \{P_1,\;P_2\} \equiv \{P_1 \otimes P_1,\;P_1  \otimes P_2,\;P_2 \otimes P_1,\;P_2 \otimes P_2\}$. As before, the product $P_k \otimes P_j$ is abbreviated $P_k P_j$.

\section{Tridiagonal matrix decompositions} \label{sec:tridiag_decomp}

We consider an arbitrary tridiagonal matrix $B \in \mathbb{C}^{N \times N}$, where $N = 2^n$ of the following form: 
\begin{equation}
    B = 
    \begin{pmatrix}
    c_1 & a_1 &  0 &  \dots & 0 &  0 & 0 \\
    b_1 & c_2 &  a_2 & \dots & 0 & 0 & 0   \\
    0 &    b_2 & c_3 & \dots & 0 & 0 & 0   \\
    \vdots & \vdots & \vdots & \ddots & \vdots & \vdots & \vdots \\
    0 & 0 & 0 & \dots & c_{N-2} & a_{N-2} & 0  \\
    0 & 0 & 0 & \dots & b_{N-2} & c_{N-1} & a_{N-1} \\
    0 & 0 & 0 & \dots & 0 & b_{N-1} & c_{N} &  \\
    \end{pmatrix}.
    \label{B_tridiag}
\end{equation}

Proposition \ref{theor: sets} provides the maximal possible set of Pauli strings with nonzero coefficients in the decomposition of an arbitrary tridiagonal matrix with complex entries.  We limit our consideration to tridiagonal matrices with only real entries (proposition \ref{theor: real_tridiag_mat}) and further we  consider only tridiagonal symmetric real matrices ($a_i=b_i$) in corollary \ref{real_sym_decomposition} and provide the maximal possible set of Pauli strings with nonzero coefficients in each case, as well as provide a possible partitioning of these strings into sets of internally commuting operators.

\subsection{Pauli strings present in the decomposition}

We formulate the following proposition regarding the decomposition of an arbitrary tridiagonal matrix $B$ with complex entries shown in \eqref{B_tridiag} into Pauli strings:

\begin{restatable}[Decomposition of an arbitrary tridiagonal matrix]{proposition}{decompositionSets}\label{theor: sets} 
An arbitrary tridiagonal matrix $B \in \mathbb{C}^{N \times N}$, where $N \equiv 2^n$, 
 can have Pauli strings in its decomposition with nonzero coefficients only from the union of the following disjoint sets with total cardinality of $(n+1)2^n:$
\addtolength{\tabcolsep}{-1.2pt} 
\begin{center}
\begin{tabular}{cccccccccccc}
\renewcommand{\arraystretch}{2}
\textit{0}. & $\{I, Z\}$ & $\otimes$ & $\{I, Z\}$ & $\otimes$ & $\cdots$ & $\otimes$ & $\{I, Z\}$ & $\otimes$ & $\{I, Z\}$ & $\otimes$ & $\{I, Z\}$ \\ [1ex]
\textit{1}. & $\{I, Z\}$ & $\otimes$ & $\{I, Z\}$ & $\otimes$ & $\cdots$ & $\otimes$ & $\{I, Z\}$ & $\otimes$ & $\{I, Z\}$ & $\otimes$ & $\{X, Y\}$ \\ [1ex]
\textit{2}. & $\{I, Z\}$ & $\otimes$ & $\{I, Z\}$ & $\otimes$ & $\cdots$ & $\otimes$ & $\{I, Z\}$ & $\otimes$ & $\{X, Y\}$ & $\otimes$ & $\{X, Y\}$ \\ [1ex]
\textit{3}. & $\{I, Z\}$ & $\otimes$ & $\{I, Z\}$ & $\otimes$ & $\cdots$ & $\otimes$ & $\{X, Y\}$ & $\otimes$ & $\{X, Y\}$ & $\otimes$ & $\{X, Y\}$ \\ [1ex]
$\vdots$ &  &  & &  &  &  &  &  &  &  & \\ [1ex]
\textit{n-1}. & $\{I, Z\}$ & $\otimes$ & $\{X, Y\}$ & $\otimes$ & $\cdots$ & $\otimes$ & $\{X, Y\}$ & $\otimes$ & $\{X, Y\}$ & $\otimes$ & $\{X, Y\}$ \\ [1ex]
\textit{n}. & $\{X, Y\}$ & $\otimes$ & $\{X, Y\}$ & $\otimes$ & $\cdots$ & $\otimes$ & $\{X, Y\}$ & $\otimes$ & $\{X, Y\}$ & $\otimes$ & $\{X, Y\}$ \\ [-1ex]
\multicolumn{1}{c}{} & \multicolumn{11}{c}{\upbracefill} \\ [-1ex]
\multicolumn{1}{c}{} & \multicolumn{11}{c}{$\scriptstyle n$}
\end{tabular} 
\end{center}
\addtolength{\tabcolsep}{1.2pt}
\end{restatable}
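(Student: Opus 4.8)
The plan is to pin down exactly which index pairs $(\mathbf{x},\mathbf{z})$ can carry a non-zero weight $\beta_{\mathbf{x},\mathbf{z}}$ in the expansion \eqref{eq: decomposition}, and then to read off the admissible Pauli strings through the Walsh encoding. First I would record the action of a single Walsh operator on a computational basis state: combining \eqref{eq:action_Z_on_ket} and \eqref{eq:action_X_on_ket}, and using that $\overline{\mathbf{p}}^{\mathbf{x}}=\mathbf{p}\oplus\mathbf{x}$, gives
\begin{equation*}
\hat{W}(\mathbf{x},\mathbf{z})\ket{\mathbf{p}} = i^{\mathbf{x}\cdot\mathbf{z}}(-1)^{\mathbf{z}\cdot\mathbf{p}}\ket{\mathbf{p}\oplus\mathbf{x}}.
\end{equation*}
Hence $\bra{\mathbf{q}}\hat{W}(\mathbf{x},\mathbf{z})\ket{\mathbf{p}}$ is supported only on $\mathbf{q}=\mathbf{p}\oplus\mathbf{x}$, so the bit string $\mathbf{x}$ controls precisely the offset between the row and the column of the single nonzero entry in each column, while $\mathbf{z}$ only distributes signs and phases.

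Next I would invoke orthogonality of the Pauli strings under the trace inner product to write the weight as $\beta_{\mathbf{x},\mathbf{z}} = \overline{i^{\mathbf{x}\cdot\mathbf{z}}}\sum_{\mathbf{p}}(-1)^{\mathbf{z}\cdot\mathbf{p}}\,B_{\mathbf{p}\oplus\mathbf{x},\,\mathbf{p}}$, where $B_{\mathbf{q},\mathbf{p}}$ denotes the entry of \eqref{B_tridiag} in row $\mathrm{int}(\mathbf{q})$ and column $\mathrm{int}(\mathbf{p})$. The decisive observation is that the whole family $\{\beta_{\mathbf{x},\mathbf{z}}\}_{\mathbf{z}}$ vanishes identically unless there exists at least one $\mathbf{p}$ with $B_{\mathbf{p}\oplus\mathbf{x},\,\mathbf{p}}\neq 0$; since $B$ is tridiagonal, this requires $|\mathrm{int}(\mathbf{p}\oplus\mathbf{x})-\mathrm{int}(\mathbf{p})|\le 1$ for some $\mathbf{p}$. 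Thus the problem collapses to a purely combinatorial question about bit strings: which XOR masks $\mathbf{x}$ can relate two integers that coincide or differ by one?

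The heart of the argument, and the step I expect to be the main obstacle, is the characterization of those masks via binary carry propagation. The case of coinciding integers forces $\mathbf{x}=\mathbf{0}$. For a difference of one, incrementing an integer $m$ flips exactly its trailing block of $1$s together with the first following $0$; equivalently, $\mathrm{int}(\mathbf{p})$ and $\mathrm{int}(\mathbf{p})+1$ differ in binary precisely on a contiguous block of positions anchored at the least-significant end. I would prove both directions: any such anchored run of length $k$ is realized by the pair $m=2^{k-1}-1$, $m+1=2^{k-1}$ (legal since $k\le n$), and conversely every unit increment produces only such a run, with the $-1$ case following by symmetry of the XOR. Consequently the admissible masks are exactly $\mathbf{x}=\mathbf{0}$ together with the $n$ runs of $1$s of lengths $1,\dots,n$ anchored at the least-significant end, for a total of $n+1$ values; for every other $\mathbf{x}$ each summand above has $|\mathrm{int}(\mathbf{p}\oplus\mathbf{x})-\mathrm{int}(\mathbf{p})|\ge 2$, so $\beta_{\mathbf{x},\mathbf{z}}=0$ for all $\mathbf{z}$.

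Finally I would translate each admissible mask back into a set of Pauli strings. Inspecting $\hat{W}$ factorwise, position $l$ produces an element of $\{I,Z\}$ when $x_l=0$ and of $\{X,Y\}$ when $x_l=1$, the choice being governed by $z_l$. A mask with a length-$k$ run therefore yields the set carrying $\{X,Y\}$ in the $k$ contiguous positions of its support and $\{I,Z\}$ elsewhere, which is precisely the $k$-th row of the table, while $\mathbf{x}=\mathbf{0}$ gives the all-$\{I,Z\}$ set. Distinct masks have distinct supports, so the $n+1$ sets are pairwise disjoint; each is a Cartesian product of $n$ two-element sets and hence has cardinality $2^n$, giving the claimed total $(n+1)2^n$. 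The only bookkeeping care needed is to keep the ordering convention of \eqref{eq: decomposition} consistent, so that ``anchored at the least-significant end'' matches the placement of the $\{X,Y\}$ blocks in the statement.
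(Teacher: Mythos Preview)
Your proposal is correct and follows essentially the same route as the paper: both compute $\beta_{\mathbf{x},\mathbf{z}}$ via the trace formula, reduce the question to determining for which masks $\mathbf{x}$ there exists $\mathbf{p}$ with $|\mathrm{int}(\mathbf{p}\oplus\mathbf{x})-\mathrm{int}(\mathbf{p})|\le 1$, and settle this by the binary carry-propagation analysis of $p\mapsto p+1$ to obtain exactly the $n+1$ anchored runs of ones. The paper packages the first reduction as a separate lemma (Lemma~\ref{lem: lem 1}) and writes the carry analysis via the identity $x_j=\prod_{k<j}p_k$, but the content is the same; your closing remark about matching the bit-ordering convention to the side on which the $\{X,Y\}$ block appears is exactly the bookkeeping the paper handles through its $V_m^n$ selector.
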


The first step in the procedure for decomposition of a tridiagonal matrix in the Pauli basis \eqref{decomp_B} consists of symbolic generation of Pauli strings starting from all diagonal $\{I,Z\}$ operators to all antidiagonal operators $\{X,Y\}$  by replacing one diagonal operator on the right with an antidiagonal operator at each step $m$. The decomposition weights can be calculated later, based on the selected Pauli strings, see Sec. \ref{subsec:dec_weight}. Note that the cardinality of the union in proposition \ref{theor: sets} is based on the structure of an arbitrary tridiagonal matrix; weight calculation may result in fewer Pauli strings in the decomposition.

We denote the sets in proposition \ref{theor: sets} as
\begin{equation}
    S_{m,\pm}^n = \{I,\;Z\}^{\otimes({n-m})} \otimes \{X,Y\}^{\otimes m}, \quad m=0, \dots, n.
    \label{eq:general_decomposition}
\end{equation}
When $m=n$ ($m=0$) the first (second) tensor product is omitted. Each step $m$ generates $2^n$ Pauli strings, and we divide these strings into two sets named $S_{m,+}$ and $S_{m,-}$, where $+$ indicates that the number of $Y$ operators in every Pauli string in the set is even and $-$ indicates that this number is odd. When $m=0$, there are no $Y$ operators, so we will denote it as $S_{0}$. Note that each $m$ corresponds to a row labeled by $m$ in proposition \ref{theor: sets}.

The bit string notation from Sec. \ref{sec:definitions} provides a concise description of sets $S_{m,\pm}^n$ where the correspondence to $m$ is given by bit string $\mathbf{x}$ as follows:
\begin{align}
\label{eq: subsets_x_z}
\begin{split}
 S_{m,+}^n =\{\hat{W}(\mathbf{x},\mathbf{z}): \mathbf{x} \cdot\mathbf{z} = 0\: (\mathrm{mod}~2),~\mathbf{x} =  V_m^n,~\mathbf{z} \in \mathbb{B}^n\}, \\
  S_{m,-}^n =\{\hat{W}(\mathbf{x},\mathbf{z}): \mathbf{x} \cdot\mathbf{z} = 1\: (\mathrm{mod}~2),~\mathbf{x} =  V_m^n,~\mathbf{z} \in \mathbb{B}^n\},
\end{split}
\end{align}
where $V_m^n$ is a selector with $m$ bits set to one and the following $n - m$ bits set to zero, like: $V_m^n=(\underbrace{1, \dots, 1}_{m}, \underbrace{0, \dots ,0}_{n - m})$, formally:
\begin{equation}
V_m^n = (v_1^m,\dots,v_n^m), \quad
v_j^m = \begin{cases}
        1, \; m\geq j\\
        0, \; m < j
        \end{cases}.
\label{eq: V_m^n}
\end{equation}
The bit string $\mathbf{x}$ corresponds to $m$ such that the first $m$ positions of the bit string set to one, and the remaining ($n-m$) positions are all zeros. To generate each subset $S_{m,\pm}^n$, $\mathbf{z}$ traverses all numbers $0, \dots, n$ and the generated Pauli strings are sorted according to the outcome of $\mathbf{x} \cdot \mathbf{z}$.

Each of the subsets $S_{m,\pm}^n$ contains commuting Pauli strings due to the following proposition: 

\begin{restatable}[Commutativity criterion]{proposition}{lemmacommutationA}
\label{lem: lemma_commutation}
Let $P=\hat{W}(\mathbf{x},\mathbf{z})$ and $Q=\hat{W}(\mathbf{a},\mathbf{b})$ be two Pauli strings of length $n$, where $\mathbf{x},\mathbf{z},\mathbf{a},\mathbf{b} \in \mathbb{B}^n$. Then, $P$ and $Q$ commute iff 
\begin{equation} \label{eq: commutation}
    \mathbf{x} \cdot\mathbf{b} = \mathbf{a}\cdot\mathbf{z} \Mod{2}.
\end{equation}
\end{restatable}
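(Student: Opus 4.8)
The plan is to exploit the fact that the prefactor $i^{\mathbf{x}\cdot\mathbf{z}}$ appearing in $\hat{W}(\mathbf{x},\mathbf{z})$ is a scalar: it commutes with everything, and the combined scalar $i^{\mathbf{x}\cdot\mathbf{z}}i^{\mathbf{a}\cdot\mathbf{b}}$ multiplies both $PQ$ and $QP$ identically, so it cancels when we compare the two orderings. Hence it suffices to decide whether the monomials $X^{\mathbf{x}}Z^{\mathbf{z}}$ and $X^{\mathbf{a}}Z^{\mathbf{b}}$ commute. Since each operator is a tensor product over the $n$ qubits, I would first settle the single-qubit case and then lift the conclusion to the full string by multiplicativity.

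For a single qubit the only nontrivial input is the anticommutation $ZX=-XZ$, which yields $Z^{z}X^{a}=(-1)^{za}X^{a}Z^{z}$ for $z,a\in\mathbb{B}$ (the sign is $-1$ precisely when $z=a=1$). Moving every $Z^{z_j}$ past the adjacent $X^{a_j}$ and collapsing the exponents via $X^{2}=Z^{2}=I$ (so that $X^{x}X^{a}=X^{x\oplus a}$ and $Z^{z}Z^{b}=Z^{z\oplus b}$), I obtain
\begin{equation*}
X^{\mathbf{x}}Z^{\mathbf{z}}X^{\mathbf{a}}Z^{\mathbf{b}} = (-1)^{\mathbf{z}\cdot\mathbf{a}}\,X^{\mathbf{x}\oplus\mathbf{a}}Z^{\mathbf{z}\oplus\mathbf{b}},
\end{equation*}
and, interchanging the roles of the two strings,
\begin{equation*}
X^{\mathbf{a}}Z^{\mathbf{b}}X^{\mathbf{x}}Z^{\mathbf{z}} = (-1)^{\mathbf{b}\cdot\mathbf{x}}\,X^{\mathbf{x}\oplus\mathbf{a}}Z^{\mathbf{z}\oplus\mathbf{b}}.
\end{equation*}

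The two expressions carry the identical reduced monomial $X^{\mathbf{x}\oplus\mathbf{a}}Z^{\mathbf{z}\oplus\mathbf{b}}$ and differ only by the overall sign $(-1)^{\mathbf{z}\cdot\mathbf{a}-\mathbf{b}\cdot\mathbf{x}}$. Thus $PQ=QP$ holds exactly when $\mathbf{z}\cdot\mathbf{a}\equiv\mathbf{x}\cdot\mathbf{b}\Mod{2}$, which is the claimed criterion since $\mathbf{z}\cdot\mathbf{a}=\mathbf{a}\cdot\mathbf{z}$. For the converse half of the ``iff'', I would observe that when the parities differ the same computation gives $PQ=-QP$; as a Pauli string is invertible, $PQ=0$ is impossible, so $PQ=QP$ and $PQ=-QP$ cannot both hold, and the strings genuinely anticommute. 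I do not expect a real obstacle: the only point demanding care is the sign bookkeeping when commuting the entire $Z$-block past the entire $X$-block across all $n$ qubits at once, verifying that the accumulated sign is exactly $(-1)^{\sum_j z_j a_j}=(-1)^{\mathbf{z}\cdot\mathbf{a}}$ and that every exponent reduction is genuinely modulo $2$ (i.e.\ $\oplus$). An equivalent route would invoke the standard symplectic inner product on the Pauli tableau, but I prefer the direct calculation above as it stays self-contained within the notation fixed in Section \ref{sec:definitions}.
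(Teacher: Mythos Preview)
Your argument is correct and essentially matches the paper's own proof: both reduce to the single-qubit identity $Z^{z}X^{a}=(-1)^{za}X^{a}Z^{z}$, tensor it up, and read off $PQ=(-1)^{\mathbf{z}\cdot\mathbf{a}+\mathbf{x}\cdot\mathbf{b}}QP$. The only cosmetic difference is that the paper swaps the blocks twice to land directly on $QP$, whereas you reduce both orderings to the common normal form $X^{\mathbf{x}\oplus\mathbf{a}}Z^{\mathbf{z}\oplus\mathbf{b}}$ and compare signs; your explicit remark that invertibility of Pauli strings rules out $PQ=QP=0$ is a nice touch that the paper leaves implicit.
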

\
\begin{restatable}{corollary}{corcommutation}
\label{cor: pauli commutation}
Let $P=\hat{W}(\mathbf{x},\mathbf{z})$ and $Q=\hat{W}(\mathbf{x},\mathbf{b})$ be two Pauli strings of length $n$, where $\mathbf{x},\mathbf{z},\mathbf{b} \in \mathbb{B}^n$. Let the parity of $Y$ operators in both strings be equal, then $P$ and $Q$ commute.
\end{restatable}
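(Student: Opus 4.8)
The plan is to derive the corollary directly from the commutativity criterion in Lemma~\ref{lem: lemma_commutation}, so the only real work is translating the hypothesis of equal parity of $Y$ operators into the algebraic condition appearing in that lemma.

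First I would specialize Lemma~\ref{lem: lemma_commutation} to the situation at hand. Since both strings share the same $\mathbf{x}$, I set $\mathbf{a}=\mathbf{x}$ in the criterion \eqref{eq: commutation}, so that $P=\hat{W}(\mathbf{x},\mathbf{z})$ and $Q=\hat{W}(\mathbf{x},\mathbf{b})$ commute if and only if $\mathbf{x}\cdot\mathbf{b}=\mathbf{x}\cdot\mathbf{z}\Mod{2}$. It therefore suffices to show that this congruence is exactly the statement that the two strings contain the same parity of $Y$ operators.

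The key step is to count $Y$ operators in terms of the bit strings. Inspecting the definition $\hat{W}(\mathbf{x},\mathbf{z})=\bigotimes_{j=1}^n i^{x_j z_j}X^{x_j}Z^{z_j}$, the $j$-th tensor factor equals $I,X,Z,Y$ according to whether $(x_j,z_j)$ is $(0,0),(1,0),(0,1),(1,1)$; in particular a $Y$ sits at position $j$ precisely when $x_j=z_j=1$. Hence the number of $Y$ operators in $\hat{W}(\mathbf{x},\mathbf{z})$ equals $\sum_{j=1}^n x_j z_j=\mathbf{x}\cdot\mathbf{z}$, and likewise the number of $Y$'s in $\hat{W}(\mathbf{x},\mathbf{b})$ equals $\mathbf{x}\cdot\mathbf{b}$. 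The parities of these two counts agree iff $\mathbf{x}\cdot\mathbf{z}\equiv\mathbf{x}\cdot\mathbf{b}\Mod{2}$, which is exactly the commutation condition obtained above; this closes the argument.

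I do not expect a genuine obstacle here, as the corollary is an immediate specialization of the lemma. The one point requiring care is the bookkeeping identification of the $Y$-count with the inner product $\mathbf{x}\cdot\mathbf{z}$, which rests on the single-qubit fact $i\,XZ=Y$ already implicit in the definition of $\hat{W}$.
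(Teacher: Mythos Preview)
Your proposal is correct and follows essentially the same approach as the paper: both identify the number of $Y$ operators in $\hat{W}(\mathbf{x},\mathbf{z})$ with $\mathbf{x}\cdot\mathbf{z}$ by inspecting the single-qubit factors, and then specialize Lemma~\ref{lem: lemma_commutation} with $\mathbf{a}=\mathbf{x}$ to conclude. Your write-up is in fact slightly more explicit than the paper's, spelling out all four $(x_j,z_j)$ cases and noting that the parity condition is equivalent (not just sufficient) for commutation in this setting.
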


Proposition \ref{lem: lemma_commutation} and corollary \ref{cor: pauli commutation} are proven in Appendix Section \ref{app: sec: theorem_general}. We reach a conclusion that $S_{m,+}^n$ and $S_{m,-}^n$ each are internally commuting subsets. Therefore, the general matrix decomposition will have $2n+1$ internally commuting subsets.

Analogous to the general case, for a real tridiagonal matrix $B$ the following proposition will hold: 

\begin{restatable}[Real tridiagonal matrix]{proposition}{realtridB}
\label{theor: real_tridiag_mat}
        A real tridiagonal matrix $B \in \mathbb{R}^{N \times N}$, where $N = 2^n$,  can have Pauli strings in its decomposition with non-zero coefficients only from the union of $2n+1$ disjoint internally commuting sets $S_{m,\pm}$, $2n$ of which have a cardinality of  $2^{n-1}$ each and one of which is given by $S_{0,\pm}$ and has a cardinality of $2^n.$ The cardinality of the union is $(n+1)2^n.$
\end{restatable}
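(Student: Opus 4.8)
The plan is to obtain Theorem~\ref{theor: real_tridiag_mat} as a specialization of Theorem~\ref{theor: sets} together with the parity refinement recorded in \eqref{eq: subsets_x_z} and \eqref{eq:general_decomposition}. Since $\mathbb{R}\subset\mathbb{C}$, a real tridiagonal matrix is in particular a complex tridiagonal matrix, so Theorem~\ref{theor: sets} already guarantees that its Pauli decomposition \eqref{eq: decomposition} is supported inside $\bigcup_{m=0}^{n} S_m^n$, where $S_m^n=\{I,Z\}^{\otimes(n-m)}\otimes\{X,Y\}^{\otimes m}$ has cardinality $2^n$, for a total of $(n+1)2^n$ strings. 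What remains is therefore purely a matter of splitting these $n+1$ sets into the advertised $2n+1$ internally commuting pieces and counting them.

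For the partition I would use the bit-string description \eqref{eq: subsets_x_z}: inside $S_m^n$ every string has the fixed $\mathbf{x}=V_m^n$ (the selector with the first $m$ bits set), while $\mathbf{z}$ ranges over all of $\mathbb{B}^n$, and the number of $Y$ factors equals $\mathbf{x}\cdot\mathbf{z}=\sum_{j\le m} z_j$. Sorting by the parity of this quantity produces $S_{m,+}$ (even) and $S_{m,-}$ (odd). The cardinality count is then a one-line bijection: for $m\ge 1$ the free coordinates $z_{m+1},\dots,z_n$ contribute a factor $2^{n-m}$, and flipping any one of $z_1,\dots,z_m$ interchanges the even- and odd-parity strings, so each parity class contains exactly $2^{m-1}2^{n-m}=2^{n-1}$ strings; for $m=0$ one has $\mathbf{x}\cdot\mathbf{z}\equiv 0$, hence $S_{0,-}=\emptyset$ and $S_0=S_{0,+}$ retains all $2^n$ strings. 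This yields $2n$ sets of size $2^{n-1}$ and one of size $2^n$, whose disjoint union has cardinality $2n\cdot 2^{n-1}+2^n=(n+1)2^n$, as claimed. Internal commutativity is then immediate from Corollary~\ref{cor: pauli commutation}: any two strings in a single $S_{m,\pm}$ share the same $\mathbf{x}=V_m^n$ and, by construction, the same parity of $Y$ operators, which is precisely its hypothesis; equivalently one verifies the criterion \eqref{eq: commutation} of Lemma~\ref{lem: lemma_commutation} directly for $\hat{W}(\mathbf{x},\mathbf{z})$ and $\hat{W}(\mathbf{x},\mathbf{b})$.

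I expect the only genuinely delicate point to be the role played by the reality hypothesis, because --- in contrast to the real \emph{symmetric} case treated in the following corollary --- reality alone does not shrink the support. Indeed, the decomposition of a non-symmetric real off-diagonal block already populates both parity classes, e.g.\ $\big(\begin{smallmatrix}0&1\\0&0\end{smallmatrix}\big)=\tfrac12(X+iY)$ uses both $X\in S_{1,+}$ and $Y\in S_{1,-}$. Thus the content of the theorem is the refined commuting partition rather than a reduction in cardinality, and the natural place to make the reality assumption visible is in the coefficients: writing $\beta_{\mathbf{x},\mathbf{z}}=\mathrm{Tr}\!\big(\hat{W}(\mathbf{x},\mathbf{z})\,B\big)$ and using that even-$Y$ strings are real symmetric while odd-$Y$ strings are purely imaginary antisymmetric shows that the weights carried by $S_{m,+}$ are real and those carried by $S_{m,-}$ are purely imaginary.

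If instead one wished to re-derive the support from scratch rather than quote Theorem~\ref{theor: sets}, the main obstacle would be the carry structure of the shift-by-one (increment and decrement) operators underlying the two off-diagonal bands: their expansion in the $X/Z$ bit-string basis is exactly what spreads the off-diagonal weight across all levels $m=1,\dots,n$, and organising this expansion so that each contribution lands in the correct $S_{m,\pm}$ is where the real work would lie.
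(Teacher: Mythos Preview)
Your proposal is correct and follows essentially the same route as the paper: invoke Theorem~\ref{theor: sets} for the support, split each $S_m^n$ by the parity of $\mathbf{x}\cdot\mathbf{z}$, appeal to Corollary~\ref{cor: pauli commutation} for internal commutativity, and count. The one structural difference is that the paper opens by deriving the identity $\beta_{\mathbf{x},\mathbf{z}}=\beta_{\mathbf{x},\mathbf{z}}^{*}(-1)^{\mathbf{x}\cdot\mathbf{z}}$ from $B=B^{*}$ and then proceeds to the parity partition, whereas you go straight to the partition and relegate the real/imaginary dichotomy of the coefficients to a side remark; your observation that the reality hypothesis does not shrink the support (illustrated by $\tfrac12(X+iY)$) is correct and makes explicit something the paper leaves implicit, and the coefficient identity you sketch is exactly what the paper records and later reuses in the proof of Corollary~\ref{lemma_Hermitian}.
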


For the case of a real symmetric tridiagonal matrix $B$ the symmetry will enable additional cancellations in the decomposition, such that only $S_{m,+}$ may be present. This is reflected in the following corollary:

\begin{restatable}[Real symmetric tridiagonal matrix]{corollary}{realsymdecomposition}\label{real_sym_decomposition}
       A real symmetric tridiagonal matrix $B \in \mathbb{R}^{N \times N}$, where $N = 2^n$,  can have Pauli strings in its decomposition with non-zero coefficients only from the union of $n+1$ disjoint internally commuting sets, $n$ of which are given by $S_{m,+}$ and have a cardinality of  $2^{n-1}$ each and one of which is given by $S_{0}$ and has a cardinality of $2^n.$ The cardinality of the union is $(n+2)2^{n-1}.$
\end{restatable}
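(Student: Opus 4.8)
The plan is to take Theorem \ref{theor: real_tridiag_mat} as the base case and extract the single extra constraint that symmetry imposes. Since a real symmetric tridiagonal matrix is in particular real, its Pauli decomposition is already confined to the $2n+1$ disjoint, internally commuting sets $S_{m,\pm}$ with real coefficients, with $S_0$ of cardinality $2^n$ and each $S_{m,\pm}$ ($m=1,\dots,n$) of cardinality $2^{n-1}$. The whole content of the corollary is then that the condition $B=B^{T}$ kills every string in the odd-parity sets $S_{m,-}$, so I would organize the argument around showing that $\beta_{\mathbf{x},\mathbf{z}}=0$ whenever $\hat{W}(\mathbf{x},\mathbf{z})$ contains an odd number of $Y$ operators.

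First I would record how transposition acts on a single Walsh function. Because $X^{T}=X$, $Z^{T}=Z$ and $Y^{T}=-Y$, and because $\hat{W}(\mathbf{x},\mathbf{z})$ carries a $Y$ at exactly the positions $j$ with $x_j=z_j=1$, transposing flips the overall sign once per $Y$. Concretely, using $(X^{\mathbf{x}}Z^{\mathbf{z}})^{T}=Z^{\mathbf{z}}X^{\mathbf{x}}=(-1)^{\mathbf{x}\cdot\mathbf{z}}X^{\mathbf{x}}Z^{\mathbf{z}}$, the prefactor $i^{\mathbf{x}\cdot\mathbf{z}}$ in $\hat{W}$ cancels between the two expressions and one is left with the identity $\hat{W}(\mathbf{x},\mathbf{z})^{T}=(-1)^{\mathbf{x}\cdot\mathbf{z}}\hat{W}(\mathbf{x},\mathbf{z})$, where the exponent $\mathbf{x}\cdot\mathbf{z}$ is precisely the number of $Y$ operators in the string.

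Next I would transpose the decomposition \eqref{eq: decomposition} term by term and invoke the bijectivity of $\hat{W}$, which makes the expansion unique. Equating the coefficients of $B$ and of $B^{T}$ yields $\beta_{\mathbf{x},\mathbf{z}}=(-1)^{\mathbf{x}\cdot\mathbf{z}}\beta_{\mathbf{x},\mathbf{z}}$ for every pair $(\mathbf{x},\mathbf{z})$. When $\mathbf{x}\cdot\mathbf{z}$ is odd this reads $\beta_{\mathbf{x},\mathbf{z}}=-\beta_{\mathbf{x},\mathbf{z}}$ and forces $\beta_{\mathbf{x},\mathbf{z}}=0$. By the definitions in \eqref{eq: subsets_x_z}, the strings with $\mathbf{x}\cdot\mathbf{z}=1\,(\mathrm{mod}\ 2)$ are exactly those of $S_{m,-}^{n}$, so every set $S_{m,-}$ drops out and only $S_0$ together with $S_{m,+}$ for $m=1,\dots,n$ can contribute.

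Finally I would assemble the bookkeeping. The surviving sets are $S_0$, of cardinality $2^{n}$, and the $n$ sets $S_{m,+}$, each of cardinality $2^{n-1}$ by Theorem \ref{theor: real_tridiag_mat}; this gives $n+1$ sets with combined cardinality $n\,2^{n-1}+2^{n}=(n+2)2^{n-1}$. Disjointness and internal commutativity are inherited unchanged from Theorem \ref{theor: real_tridiag_mat} and Corollary \ref{cor: pauli commutation}, since we only discard sets and never merge them. I expect the one genuinely delicate point to be the sign accounting in the transpose identity, namely verifying that the phase $i^{\mathbf{x}\cdot\mathbf{z}}$ combines with the anticommutation sign $(-1)^{\mathbf{x}\cdot\mathbf{z}}$ to leave exactly $(-1)^{\mathbf{x}\cdot\mathbf{z}}$; once that is pinned down, every remaining step is counting already licensed by the preceding results.
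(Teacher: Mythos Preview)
Your proposal is correct and follows essentially the same route as the paper: both derive the transpose identity $\hat{W}(\mathbf{x},\mathbf{z})^{T}=(-1)^{\mathbf{x}\cdot\mathbf{z}}\hat{W}(\mathbf{x},\mathbf{z})$, equate coefficients in $B=B^{T}$ to obtain $\beta_{\mathbf{x},\mathbf{z}}=(-1)^{\mathbf{x}\cdot\mathbf{z}}\beta_{\mathbf{x},\mathbf{z}}$, and conclude that only the even-parity sets $S_{m,+}$ survive, after which the count is immediate. One small inaccuracy to clean up: the coefficients inherited from Theorem~\ref{theor: real_tridiag_mat} are not all real (those for $S_{m,-}$ are purely imaginary), but since your argument never uses that claim and instead forces $\beta_{\mathbf{x},\mathbf{z}}=0$ directly, it does not affect correctness.
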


For convenience we omit the superscript $n$ in $S_{m,+}^n$ and omit the $+$, when the length $n$ and parity is clear from context.

\subsection{Decomposition weights} \label{subsec:dec_weight}

In order to calculate the coefficients (weights) of the proposed decomposition it is easier to separate the calculation of the weights for the ``diagonal"  and ``off-diagonal (anti-diagonal)" subsets. The matrix $B$ can be written as a sum of the diagonal $D$ and off-diagonal $F$ matrices: $B = D + F$. The diagonal matrix $D$ after decomposition consists of Pauli strings of length $n$ in the subset $S_0$ containing only $Z$ and $I$ operators. The coefficients $\beta_{\mathbf{x},\mathbf{z}}$ of corresponding $\hat{W}(\mathbf{x},\mathbf{z})$ within this first internally commuting set can be calculated as
\begin{equation}
    \beta_{\mathbf{0},\mathbf{z}}  = \sum_{p=0}^{2^n-1}(-1)^{\mathbf{z}\cdot\Bin{p}}c_p,
    \label{eq:alpha_diag_B}
\end{equation}
where $c_p$ is a diagonal element of the matrix $B$ as in \eqref{B_tridiag} and $\mathbf{0}$ is $(0,\dots,0)$, $\mathbf{z} \in \mathbb{B}^n$.

For the off-diagonal matrix $F$, the weights $\beta_{\mathbf{x},\mathbf{z}}$ of the decomposition [see \eqref{eq: decomposition}] may be calculated for each $\hat{W}(\mathbf{x},\mathbf{z})$ in a subset $S_{m,\pm}$ given in \eqref{eq:general_decomposition} as follows:
\begin{equation}
   \begin{split}
   \beta_{\mathbf{x},\mathbf{z}} = \sum_{p=0}^{2^n-2}\imath^{\mathbf{x}\cdot\mathbf{z}}(-1)^{\mathbf{z}\cdot\Bin{p}}\delta(\Bin{p+1},\overline{\Bin{p}}^\mathbf{x})  a_p \\
    + \sum_{p=1}^{2^n-1} \imath^{\mathbf{x}\cdot\mathbf{z}}(-1)^{\mathbf{z}\cdot\Bin{p}}\delta(\Bin{p-1},\overline{\Bin{p}}^\mathbf{x}) b_{p-1}
    \end{split}
    \label{eq:alpha_off_B}
\end{equation}
where $(\mathbf{x},\mathbf{z})$ corresponds to \eqref{eq: subsets_x_z}, and $a_p$ and $b_{p}$ are the off-diagonal elements of matrix $B$ as in \eqref{B_tridiag}. 
The expression $\overline{\Bin{p}}^\mathbf{x}$ is calculated as the bit-wise XOR of bit string $\mathbf{x}$ with the inverted bit string $\mathbf{p}$.

Note that \eqref{eq:alpha_off_B} holds only for the decomposition of off-diagonal elements, while \eqref{eq:alpha_diag_B} holds only for the diagonal elements of matrix $B$. These formulas are derived in Appendix \ref{app: sec: dec_formulae}.

\subsection{Visualization of commuting subsets} \label{subsec:visualization}

 The elements of matrix $B$ can be used to calculate the coefficients of Pauli strings in its decomposition using \eqref{eq:alpha_off_B}. In Fig. \ref{contr_plot} we show the correspondence between elements of $B$ and subsets $S_m$ which contain the corresponding Pauli strings. The diagonal consists of Pauli strings containing only $Z$ and $I$ operators and corresponds to $S_0$. For the off-diagonal elements we are left with $n$ subsets $S_{m,\pm}$ specified in \eqref{eq:general_decomposition}. To show the correspondence, consider the off-diagonal component $F$  of a real symmetric matrix $B$ for $n = 4$. For a symmetric matrix (see corollary \ref{real_sym_decomposition}) we have the following subsets with size $2^{n-1}=2^3$: 
\begin{align*}
S_{1,+} = \{&I,Z\} ^{\otimes 3} \otimes \{X\}, \\
S_{2,+} = \{&I,Z\} ^{\otimes 2} \otimes \{XX,YY\}, \\
S_{3,+} = \{&I,Z\} \otimes \{XXX,XYY,YXY,YYX\}, \\
S_{4,+} = \{&XXXX, XXYY, XYXY, XYYX, \nonumber \\
&YXYX, YXXY, YYXX, YYYY \}.
\end{align*}
Figure \ref{contr_plot} illustrates how these subsets correspond to the elements of a $2^4 \times 2^4$ matrix $B$.  Each $S_{m, +},\; m=1, \dots,4$ is given a color and the structure is apparent. The length of the anti-diagonal segment is equal to $2^m$, where $m$ is the number of the $\{ X,Y \}$ Pauli operators on the right in the $S_{m}$ expression.

\begin{figure}[H]
\centering

\includegraphics[angle=-90,scale=0.3]{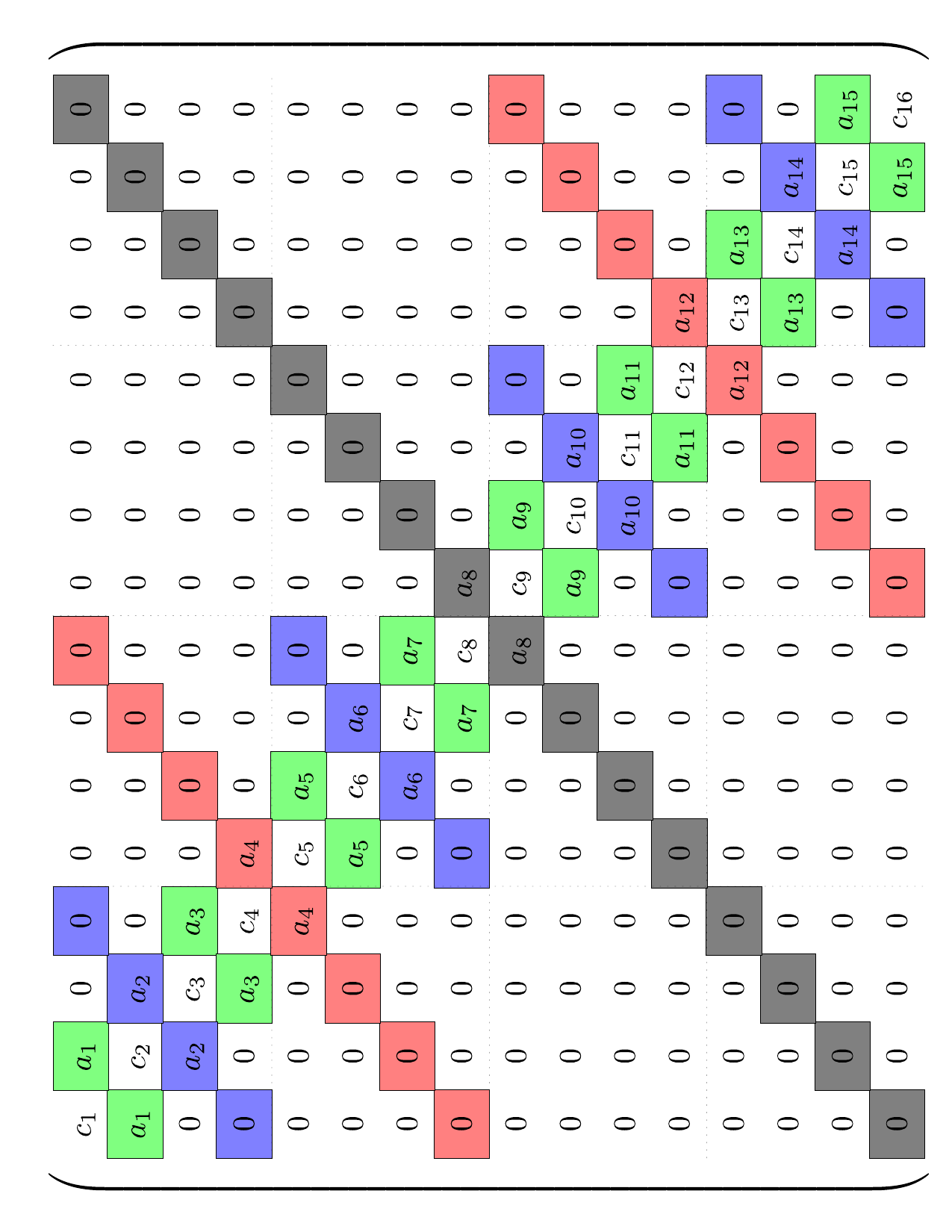}
\caption{
Illustrating the contribution of commuting subsets $S_{m,+}$ to decomposition of the matrix $B$. Same colors contribute to same commuting subset.
Colors: $S_{1}$ -- green, $S_{2}$ -- blue, $S_{3}$ -- red, $S_{4}$ -- black.}
\label{contr_plot}
\end{figure}

The said structure appears not only for real symmetric matrices. It follows from the associativity of the tensor product, where we take the $n-m$ matrices from the set $\{I, Z\}$ on the left with $m$ matrices from the set $\{X, Y\}$ on the right. For each Pauli string $P$ from the subset $S_{m,\pm}$ we have
\begin{equation}
    P = P_D \otimes P_A,
    \label{Pauli_as_DandA}
\end{equation}
where $P_D$ is a diagonal matrix of size $2^{n-m} \times 2^{n-m}$ and $P_A$ is an anti-diagonal matrix of size $2^{m} \times 2^{m}$.

\section{Decomposition for a symmetrized matrix} \label{sec:H_decomposition}
A Hermitian matrix $H$ can be constructed from any matrix $B$ as:
\begin{equation}
    H =
    \begin{pmatrix}
    0 & B\\
    B^\dagger & 0\\
    \end{pmatrix}.
    \label{hamil}
\end{equation}
As before, consider the case where the matrix $B$ is tridiagonal of size $2^n \times 2^n$, therefore the size of $H$ is $2^{n+1} \times 2^{n+1}$. Here we assume $B$ to be real and therefore $H$ is symmetric. The following statement holds:

\begin{restatable}{corollary}{lemmaHermitianA}
\label{lemma_Hermitian}
The number of terms in the decomposition of a symmetric matrix $H$ \eqref{hamil} is equal to the number of terms in the decomposition of the real matrix $B$ and is bounded above by $(n+1)2^n$.
The Pauli strings in the decomposition of $H$ can be partitioned into the following subsets of commuting strings:
\begin{align*} \label{eq:H_decomposition}
    S_0 &= \{X\} \bigotimes_1^{n} \{I,Z\}, &(m=0),\\
    S_m &= \{\widehat{X,Y}\} \bigotimes_1^{n-m} \{I,Z\} \bigotimes_1^{m} \{X,Y\}, \;\; &
    m=1, \dots, n-1,\\
    S_n &= \{\widehat{X,Y}\} \bigotimes_1^{m} \{X,Y\}, &
    (m=n),
\end{align*}
where the expression $\widehat{\{ X, Y\}}$ selects $X$ or $Y$ such that the number of $Y$ operators in each Pauli string is even.
\end{restatable}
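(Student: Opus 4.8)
The plan is to reduce the statement to the decomposition of $B$ from Theorem~\ref{theor: real_tridiag_mat} by combining the block form of $H$ with the transposition behaviour of Pauli strings. First I would record two elementary facts about a Pauli string $P=\hat W(\mathbf x,\mathbf z)$ of length $n$. Letting $\nu(P)$ denote the number of $Y$ operators in $P$, we have $P^{T}=(-1)^{\nu(P)}P$ (since $X^{T}=X$, $Z^{T}=Z$, $I^{T}=I$ while $Y^{T}=-Y$); and since $\hat W(\mathbf x,\mathbf z)=i^{\mathbf x\cdot\mathbf z}X^{\mathbf x}Z^{\mathbf z}$ with $X^{\mathbf x}Z^{\mathbf z}$ a real matrix and $\nu(P)=\mathbf x\cdot\mathbf z$, the string $P$ is a real matrix when $\nu(P)$ is even and a purely imaginary one when $\nu(P)$ is odd. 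Hence even-$\nu$ strings are real symmetric and odd-$\nu$ strings are imaginary antisymmetric, which is exactly the split between $S_{m,+}$ and $S_{m,-}$ of Theorem~\ref{theor: real_tridiag_mat}, and is the structural fact driving the whole argument.

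Next I would lift the expansion $B=\frac{1}{2^{n}}\sum_{P}\beta_{P}P$ to $H$. Using $B^{T}=\frac{1}{2^{n}}\sum_{P}(-1)^{\nu(P)}\beta_{P}P$, each $P$ contributes the block $\left(\begin{smallmatrix}0&P\\ P^{T}&0\end{smallmatrix}\right)$, which equals $X\otimes P$ when $\nu(P)$ is even (then $P^{T}=P$) and equals $(iY)\otimes P=Y\otimes(iP)$ when $\nu(P)$ is odd (then $P^{T}=-P$). Therefore
\begin{equation*}
H=\frac{1}{2^{n}}\Bigl(\sum_{P:\ \nu(P)\ \mathrm{even}}\beta_{P}\,(X\otimes P)+\sum_{P:\ \nu(P)\ \mathrm{odd}}(i\beta_{P})\,(Y\otimes P)\Bigr).
\end{equation*}
An even-$\nu$ string is a real matrix, so its weight $\beta_{P}$ in the real matrix $B$ is real; an odd-$\nu$ string is imaginary, so its $\beta_{P}$ is imaginary and $i\beta_{P}$ is again real. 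Thus every term is a real multiple of a Hermitian Pauli string on $n+1$ qubits, as it must be for the symmetric $H$. The assignment $P\mapsto X\otimes P$ (even $\nu$) and $P\mapsto Y\otimes P$ (odd $\nu$) is injective with disjoint images and sends non-zero coefficients to non-zero coefficients, so the number of terms of $H$ equals that of $B$, and the bound $(n+1)2^{n}$ is inherited from Theorem~\ref{theor: real_tridiag_mat}. This settles the counting claim.

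For the partition I would note that this lift carries $S_{m,+}$ to $\{X\otimes P:P\in S_{m,+}\}$ and $S_{m,-}$ to $\{Y\otimes P:P\in S_{m,-}\}$, both consisting of strings of the form $\{X,Y\}\otimes\{I,Z\}^{\otimes(n-m)}\otimes\{X,Y\}^{\otimes m}$ with an even total number of $Y$'s; their union is precisely the set $S_{m}$ of \eqref{eq:H_decomposition}, the leading factor $\widehat{\{X,Y\}}$ enforcing the even-parity constraint. Internal commutativity is then immediate from Corollary~\ref{cor: pauli commutation}: by \eqref{eq: subsets_x_z} every $P\in S_{m,\pm}$ shares the same bit string $\mathbf x=V_{m}^{n}$, and the prepended leading qubit ($X$ or $Y$) also has $x=1$, so \emph{all} members of $S_{m}$ carry one and the same $\mathbf x$; since they moreover all have even $\nu$, equal $\mathbf x$ and equal $Y$-parity force pairwise commutation. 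Distinct $m$ give distinct $\mathbf x$, so the $n+1$ sets $S_{m}$ are disjoint and, by the bijection above, exhaust the decomposition of $H$.

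The only genuinely delicate point to verify is the reality of the lifted weights: it is what guarantees that the odd-$\nu$ strings of $B$—individually antisymmetric and purely imaginary—recombine, after prepending $Y$, into legitimate real-weighted Hermitian terms of the symmetric matrix $H$. Everything else is bookkeeping: confirming the two block identities, matching the lifted $\mathbf x$-strings to the index $m$, and checking disjointness. I expect the observation that all members of a given $S_{m}$ share the same $\mathbf x$ to be the key simplification, since it collapses the commutativity of each set to a one-line invocation of Corollary~\ref{cor: pauli commutation} rather than a case analysis over pairs of strings.
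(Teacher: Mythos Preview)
Your proposal is correct and follows essentially the same approach as the paper: both arguments identify each Pauli string $P$ of $B$ with exactly one Pauli string of $H$ (prefixing $X$ when $\nu(P)$ is even and $Y$ when $\nu(P)$ is odd), invoke the reality condition on $\beta_{\mathbf x,\mathbf z}$ from Theorem~\ref{theor: real_tridiag_mat} to show only one of the two prefixed terms survives, and then quote Corollary~\ref{cor: pauli commutation} for internal commutativity. The only cosmetic difference is that the paper first writes $H=\tfrac12\bigl[X\otimes(B+B^{\dagger})-XZ\otimes(B-B^{\dagger})\bigr]$ and then splits into real and imaginary parts of $\beta_{\mathbf x,\mathbf z}$, whereas you compute the $2\times 2$ block $\bigl(\begin{smallmatrix}0&P\\P^{T}&0\end{smallmatrix}\bigr)$ directly for each $P$; the underlying identities and the logical flow are the same.
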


The weights of the decomposition of $H$ may be calculated using formulae \eqref{eq:alpha_diag_B} and \eqref{eq:alpha_off_B}. For any square matrix $B$ with its Pauli decomposition given in \eqref{eq: decomposition} the following equality holds:
\begin{equation}
    H = \frac{1}{2^n} \sum_{\mathbf{x},\mathbf{z}\in\mathbb{B}^n} \left( \Re{\beta_{\mathbf{x},\mathbf{z}}} X - \imath \Im{\beta_{\mathbf{x},\mathbf{z}}} XZ \right) \otimes \hat{W}(\mathbf{x},\mathbf{z}),
    \label{eq:alpha_H}
\end{equation}
for details see Appendix \ref{app: sec: theorem_real}.

\begin{figure}[H]
\centering
\includegraphics[angle=-90,scale=0.3]{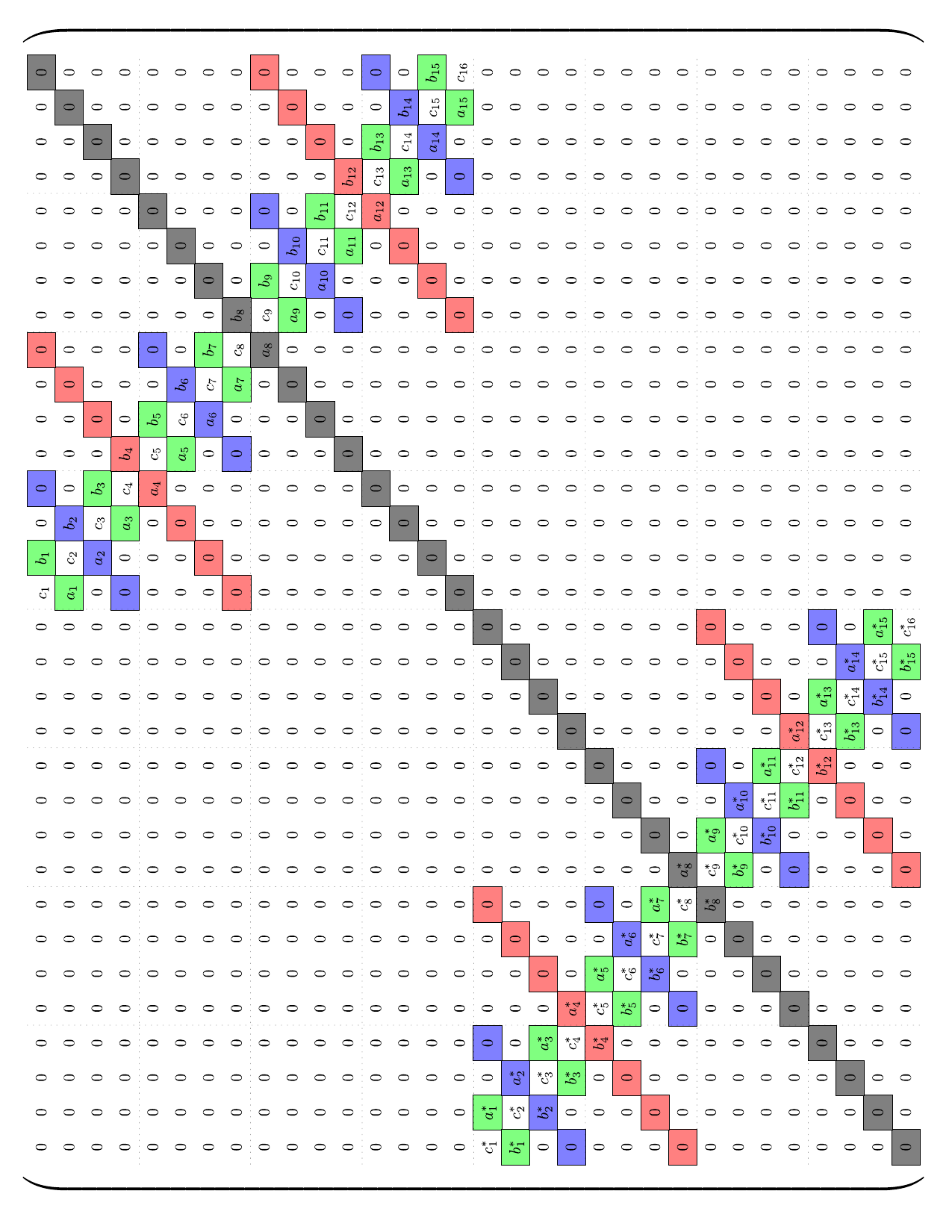}
\caption{Illustrating the contribution of commuting subsets $S_{m}$ to decomposition of the matrix $H$ of size $2^5 \times 2^5$ with $B$ and $B^{\dagger}$ of size $2^4 \times 2^4$. Same colors contribute to same commuting subset.
Color code: $S_1$ -- green, $S_2$ -- blue, $S_3$ -- red, $S_4$ -- black. $S_0$ corresponds to the diagonal of each $B$ matrix.}
\label{fig:H_sets}
\end{figure}

Consider a symmetric matrix $H$ \eqref{hamil} consisting of $B$ and $B^\top$ with $n=4$ as an example. According to corollary \ref{lemma_Hermitian}, Pauli strings in decomposition of $H$ can be arranged into the following sets:
\begin{center}
    \addtolength{\tabcolsep}{-1.2pt}  
    \begin{tabular}{cccccccccc}
    $S_0 =$ & $X$ & $\otimes$ & $\{I, Z\}$ & $\otimes$ & $\{I, Z\}$ & $\otimes$ & $\{I, Z\}$ & $\otimes$ & $\{I, Z\}$, \\ [1ex]
    $S_1 =$ & $\widehat{\{ X,Y\}}$ & $\otimes$ & $\{I, Z\}$ & $\otimes$ & $\{I, Z\}$ & $\otimes$ & $\{I, Z\}$ & $\otimes$ & $\{X, Y\}$, \\ [1ex]
    $S_2 =$ & $\widehat{\{ X,Y\}}$ & $\otimes$ & $\{I, Z\}$ & $\otimes$ & $\{I, Z\}$ & $\otimes$ & $\{X, Y\}$ & $\otimes$ & $\{X, Y\}$, \\ [1ex]
    $S_3 =$ & $\widehat{\{ X,Y\}}$ & $\otimes$ & $\{I, Z\}$ & $\otimes$ & $\{X, Y\}$ & $\otimes$ & $\{X, Y\}$ & $\otimes$ & $\{X, Y\}$, \\ [1ex]
    $S_4 =$ & $\widehat{\{ X,Y\}}$ & $\otimes$ & $\{X, Y\}$ & $\otimes$ & $\{X, Y\}$ & $\otimes$ & $\{X, Y\}$ & $\otimes$ & $\{X, Y\}$.
    \end{tabular}
    \addtolength{\tabcolsep}{1.2pt}
\end{center}
These are similar to subsets which arise in the decomposition of a real symmetric matrix $B$, but now the size of each subset is $2^n$. This follows from the fact that the terms with an odd number of $Y$ operators, which were zero for the real symmetric case, now appear in the decomposition because the $Y$ operator can be selected as the leading term to make the number of $Y$ operators even (see corollary \ref{real_sym_decomposition}). Figure \ref{fig:H_sets} shows which elements correspond to which set for the matrix $H$ of size $2^5 \times 2^5$.

\section{Circuit for Hamiltonian Simulation} \label{sec:ham_sim}

We have shown the two steps in matrix decomposition, namely the generation of Pauli strings based on matrix structure and the calculation of decomposition weights. 
Importantly, Pauli strings are assembled into commuting sets. The information about commuting sets may serve to reduce the circuit complexity of Hamiltonian simulation \cite{Berg} and to accelerate simulation of quantum dynamics on a classical computer \cite{Kawase}. In this section we use this approach to construct a circuit for Hamiltonian simulation.

The parameters that should be taken into account when implementing the evolution of the Hamiltonian \cite{Low2019hamiltonian} include the number of system qubits $n$, evolution time $t$, target error $\epsilon$, and how information on the Hamiltonian $H$ is accessed by the quantum computer. Our circuit does not require additional qubits, and it does not use oracles to access information. As we organized a large number of Pauli strings into an exponentially smaller number of commuting subsets we could expect improvement in the accuracy according to the Trotterization formula. 

The task of Hamiltonian simulation is to implement the operator $e^{-\imath Ht}$ on a quantum device. For an operator $\hat{H}$ represented by a tridiagonal or symmetrized (as discussed in Section \ref{sec:H_decomposition}) matrix $H$ the quantum circuit is constructed as follows.

First we generate internally commuting sets of Pauli strings $\{ S_{m,\pm}^n\}$ needed for the decomposition of $H$. These strings can be simultaneously diagonalized \cite{Kawase, Berg}. As an example, Table \ref{tab:diagonal_ops} contains the diagonalization operators $D_k$ for the real symmetrical tridiagonal matrix of size $2^4$ discussed in Section \ref{subsec:visualization}. Note that $D_0$ is an identity since the $S_0$ set is already diagonal. These procedures, i.e.~the generation of the commuting sets of Pauli strings and optional simultaneous diagonalization are determined by the tridiagonal matrix structure. These steps do not need to be repeated when the matrix changes values, making this approach suitable for use in variational algorithms \cite{cerezo2021variational, peruzzo2014variational}.
\begin{table}[h]
    \renewcommand{\arraystretch}{1.3} 
    \centering
    \begin{tabular}{c|l}
    $D_k$ & Simultaneous diagonalization operators for subset  $S_k$ \\
    \hline
    $D_1$ & $HHHI$ $CX(1,4)$ $CX(2,4)$ $CX(3,4)$ $HHHH$\\
    \hline
    $D_2$ & $HHHI$ $CX(1,3)$ $CX(1,4)$ $CX(2,3)$ $CX(2,4)$ $CX(3,4)$ \\
          & $CZ(1,4)$ $CZ(2,4)$ $CZ(3,4)$ $HHHH$ \\
    \hline
    $D_3$ & $HHHI$ $CX(1,3)$ $CX(1,4)$ $CX(2,4)$ $CX(3,4)$\\
          & $CZ(1,4)$ $CZ(2,4)$ $CZ(3,4)$ $HHHH$ \\
    \hline
    $D_4$ & $HHHI$ $CX(1,2)$ $CX(1,4)$ $CX(2,3)$ \\
          & $CZ(1,3)$ $CZ(3,4)$ $HHHH$
    \end{tabular}
    \caption{Diagonalization operators $D_k$ for a real symmetric tridiagonal matrix of size $2^4$ consisting of the Hadamard operators $H$, the Controlled-NOT $CX(c,t)$, and Controlled-Z $CZ(c,t)$ operators, where $c$ and $t$ are the control and the target qubits, respectively. Qubits are labeled from $1$ to $4$.}
    \label{tab:diagonal_ops}
\end{table}

When a quantum state evolves in time under the action of $e^{-\imath Ht}$ and the matrix $H$ is represented in Pauli basis as in formula \eqref{decomp_B} [or, equivalently, as in \eqref{eq: decomposition}] the Lie-Trotter product formula \cite{Trotter} and its higher order variants \cite{Childs_Trotter} should be implemented taking the internally commuting sets into account. Given the time evolution governed by the Hermitian $H$ and the number of Trotter repetitions $r$, using our decomposition we have:
\begin{align}
e^{- \imath tH} &=\exp \left( \frac{- \imath t}{2^n} \sum_{\mathbf{x},\mathbf{z}\in\mathbb{B}^n} \beta_{\mathbf{x},\mathbf{z}} \hat{W}(\mathbf{x},\mathbf{z}) \right)  \nonumber \\
& = \left[ \prod_{k=0}^{2n} \exp \left( \frac{- \imath t}{2^n r} \tilde{S}_k \right) \right]^{r} + \epsilon,
\label{sum_product}
\end{align}
where $\tilde{S}_k = \sum_{\mathbf{z}} \beta_{\mathbf{x}_k,\mathbf{z}} \hat{W}(\mathbf{x}_k,\mathbf{z})$ are renumbered Pauli strings from subsets $S_{m,\pm}$  defined in \eqref{eq: subsets_x_z}, which contain up to $2^n$ terms each. The approximation is constructed with a Lie-Trotter product formula for $2n+1$ internally commuting sets $S_{m,\pm}$ of Pauli strings, and the accuracy can be estimated as 
\begin{equation}
    \epsilon = \norm{e^{- \imath tH} - \left[ \prod_{k=0}^{2n} \exp \left( \frac{- \imath t}{2^n r} \tilde{S}_k \right) \right]^{r}} = O\left(\frac{t^2}{r}\right),
    \label{error_1st_order}
\end{equation}
where $\norm{\cdot}$ denotes the spectral norm.
The expression \eqref{sum_product} can be diagonalized as
\begin{multline}
    \label{eq:product_groups}
    \prod_{k=0}^{2n} \exp(\frac{- \imath t}{2^n r} \tilde{S}_k) \\
    =\prod_{k=0}^{2n} \exp\left(\frac{-\imath t}{2^n r}\sum_{\mathbf{z}}\beta_{\mathbf{x}_k,\mathbf{z}}\hat{W}(\mathbf{x}_k,\mathbf{z}) \right) \\
    =\prod_{k=0}^{2n} D^\dag_k \exp \left(\frac{-\imath t}{2^n r}\sum_{\mathbf{z}} \beta_{\mathbf{x}_k,\mathbf{z}}\Lambda_{k,\mathbf{z}} \right) D_k ,
\end{multline}
where $D_k$ is the diagonalization operator for each subset $\tilde{S}_{k}$. The number $k$ fixes the string $\mathbf{x}_k$, and the inner summation over $\mathbf{z}$ covers the commuting subset as in \eqref{eq: subsets_x_z}. The operator $\Lambda_{k,\mathbf{z}}$ is a diagonal operator corresponding to $\hat{W}(\mathbf{x}_k,\mathbf{z})$. It is important to note that the coefficients $\beta$ remain the same after diagonalization, or in other words, diagonalization depends only on the structure of the matrix (i.e., depends only on the Pauli subset) and does not depend on the values of the matrix elements. The computation of $D_k$ is done by using Clifford algebras \cite{Kawase, Berg}. The $D_k$ operators consist of the combinations of the single-qubit Hadamard operators and the two-qubit $CX$ and $CZ$ operators; number of gates in $D_k$ scales as $O(n^2)$, where $n$ is number of qubits.

It can be seen that in order to evaluate the propagator $e^{-\imath Ht}$ as in \eqref{sum_product}, \eqref{eq:product_groups} one has to compute the coefficients $\beta_{\mathbf{x},\mathbf{z}}$ and implement $2n+1$ diagonal exponents $\Lambda_{k,\mathbf{z}}$. The coefficients can be computed by using formulae \eqref{eq:alpha_diag_B} and \eqref{eq:alpha_off_B} and in order to implement $2n+1$ diagonal exponents one can use the results from \cite{Welsh} where it is shown that the gate count for a circuit which implements the diagonal exponent can be reduced to $O(N')$ gates with $N' < 2^n$, without ancillary qubits.

Combining all the results together we obtain gate complexity estimated as $O(r (2n+1)(2n^2 + 2^n)) = O(r n 2^n)$ since for one Trotter step for each of $2n+1$ commuting sets one needs to implement $D$, $D^\dagger$ and diagonal exponents. When considering the Trotter formula of an arbitrary order $p$, each Trotter step will contain $O(5^{\lfloor p/2 \rfloor})$ times more gates, but the accuracy $\epsilon$ will be achieved in fewer steps $r$. In work \cite{Childs_Trotter} the Trotterization error considered by leveraging information about commutation and the following scaling is given $\epsilon = O(\frac{\alpha_{\text{comm}} t^{p+1}}{r^{p}})$ with $\alpha_{\text{comm}} = \sum_{j_1, \dots, j_{p+1}}^{M} \norm{[H_{j_{p+1}}, \cdots [H_{j_2}, H_{j_1}] \cdots]}$ and $H = \sum_{j=1}^{M} H_j$, where $H_j$ are anti-Hermitian (which is the case since simulation of $e^{tH}$ is considered in \cite{Childs_Trotter}). It is an interesting question whether it is possible to obtain some tight upper bound for $\alpha_{\text{comm}}$. For now, we will use scaling provided by authors in \cite{berry2007efficient}, i.e. $\epsilon = O\left(\frac{\left(2M5^{\lfloor p/2 \rfloor - 1} \norm{H}t \right)^{p+1}}{r^{p}}\right)$, where $H = \sum_{j=1}^{M} H_j$. Note that in this formula $H$ can be considered as the sum of $2n+1$ matrices formed from commuting sets. Thus $M = 2n + 1$ and resulting number of gates $g$ is given by
\begin{equation}
    g = O\left(t n^2 2^n 5^{p} \norm{H} \left( \frac{t n \norm{H}}{\epsilon}\right)^{1/p} \right).
    \label{resulted_gates}
\end{equation}

This scaling can be made more accurate by taking into account information about the commuting sets. This can be seen for the case where all Pauli strings commute, making the Trotter error equal to zero. It is possible to find some order of Pauli strings that will reduce the error, but since the number of all possible combinations grows exponentially, this is a difficult task \cite{tranter2019ordering}. 

\section{Quantum simulation example: Solving the wave equation} \label{sec:dif_eq}

Tridiagonal matrices arise when discretizing derivatives. For example, the solution of the heat equation $u_t(x,t) = (\kappa (x) u_x(x,t))_x$ may be written as \cite{LeVeque}
\begin{equation*}
    u(x,t) = e^{Bt}u(x,0), 
\end{equation*}
where $B$ is real symmetric tridiagonal matrix as in Section \ref{sec:tridiag_decomp}.

Another example is the wave equation, considered here in more detail. Wave equation in one dimension with amplitude $u(x,t)$ and speed $c(x)$ defined in the interval $x \in [0, 1]$ is given by:
\begin{equation}
u_{tt}(x,t)=(c^2(x)u_x(x,t))_x , \quad u(x,t) \in \mathbb{R}.
\label{eq:1d_wave-q}
\end{equation}
We consider the case of the Dirichlet boundary conditions and set the initial conditions for $u$ and $u_x$ as follows:
\begin{equation}
   \begin{split}
   u(0,t) & = u(1,t) = 0 ,\quad t \in \mathbb{R_+}, \\
   u(x,0)  & = g(x),\quad u_x(x,0)  = 0.
   \end{split}
   \label{eq:wave_initial}
\end{equation}

Following Costa et al.~\cite{Costa} we reduce \eqref{eq:1d_wave-q} to the Schr{\"{o}}dinger equation. Thus, consider the Hamiltonian in the following form ($h$ is space discretization step):
\begin{equation}
    H=\frac{1}{h}     
    \begin{pmatrix}
    0 & B\\
    B^\dagger & 0\\
    \end{pmatrix},
    \label{eq:hamil_wave}
\end{equation}
which leads to the Schr{\"{o}}dinger equation (we use natural units such that $\hbar = 1$) with a two component quantum state $\psi=(\phi_V,\phi_E)^\top$
\begin{equation*}
\frac{d}{dt}
    \begin{pmatrix}
        \phi_V \\ \phi_E 
    \end{pmatrix}
    = \frac{-\imath}{h}
    \begin{pmatrix}
        0 & B \\
        B^\dagger &0  \\
    \end{pmatrix}
    \begin{pmatrix}
        \phi_V \\ \phi_E
    \end{pmatrix}.
\end{equation*}
This recovers the original (discretized) wave equation \eqref{eq:1d_wave-q} if $-BB^\dagger = \mathcal{L}$, where $\mathcal{L}$ is the Laplacian giving an approximation of the second order space derivative. To apply the method proposed in this work, we need the matrix $B$ to be square and have the size $2^n \times 2^n$, so we slightly change the matrix $B$ proposed by Costa et al.~\cite{Costa} by writing the Dirichlet boundary conditions explicitly and we have also incorporated $c(x)$ into this matrix:
\begin{equation*}
    B_{N \times N} = 
    \begin{pmatrix}
     0  & 0     & 0     & 0     & \dots & 0        & 0           &0    \\
     0     & -c_2  & \;\;c_3   & 0     & \dots & 0        & 0           &0    \\
     0     & 0     & -c_3  & \;\;c_4   & \dots & 0        & 0           &0    \\
     \vdots & \vdots & \vdots & \vdots & \ddots & \vdots    & \vdots       & \vdots\\
     0     & 0     & 0     & 0     & \dots & -c_{N-2} & \;\;c_{N-1}     & 0 \\
     0     & 0     & 0     & 0     & \dots & 0        & -c_{N-1}    & c_{N} \\
     0     & 0     & 0     & 0     & \dots & 0        & 0           &0    \\
    \end{pmatrix}.
    \label{B_variable_speed}
\end{equation*}
The wave speed values $c_k, \; k=1, \dots, N$ with $N=2^n$  result from the discretization of the speed profile $c(x)$. The resulting Hamiltonian $H$ has the form described in Section \ref{sec:H_decomposition}. Based on \eqref{eq:alpha_H} the coefficients of the decomposition are given by:
\begin{equation}
\begin{aligned}
    \beta_{\mathbf{0},\mathbf{z}} &= - \sum_{k=1}^{2^n - 2} (-1)^{\mathbf{z} \cdot \Bin{k}}  \cdot c_{k+1},\\
    \beta_{\mathbf{x},\mathbf{z}} &= \sum_{k=1}^{2^n-2} \imath^{\mathbf{x} \cdot \mathbf{z}}(-1)^{\mathbf{z} \cdot \Bin{k}}\cdot\delta(\Bin{k+1},\overline{\Bin{k}}^\mathbf{x})\cdot c_{k+2}, \\
\label{eq:alpha_B_wave}
\end{aligned}
\end{equation}
where $\mathbf{x} = \mathbf{x}(m,n) = V_m^n$ as defined in \eqref{eq: V_m^n}.

Note that with our decomposition the expansion weights contain wave speeds explicitly. Therefore this provides a method to solve partial differential equations with variable coefficients (piece-wise constant over the discretization step $dx$), since decomposition can be done only once and weights recalculated. 

We have implemented the solution for wave equation and use it with a constant speed $c=1$ to determine the number of Trotter steps and the corresponding total number of gates needed to reach the set accuracy $\epsilon<10^{-5}$ (evolution time is $t=1$). This is shown in Figure \ref{fig:number_gates_10-5} as function of the number of qubits $n$ supporting discretization $N=2^n$.
\begin{figure}[H]
\centering
\includegraphics[scale=0.25]{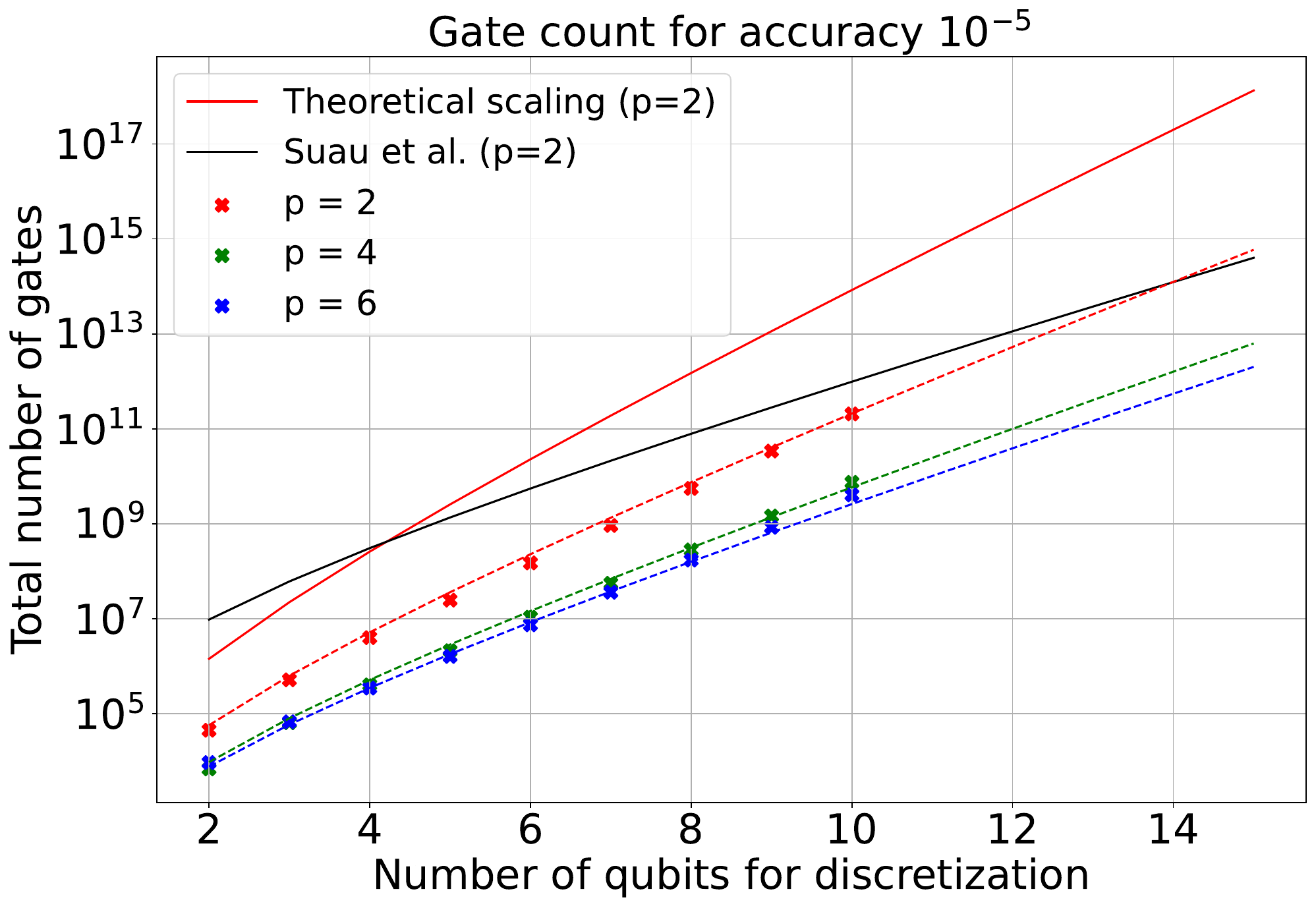}
\caption{Number of gates to approximate $e^{-\imath Ht}$ with accuracy $\epsilon = 10^{-5}$ for $1D$ wave equation Hamiltonian \eqref{eq:hamil_wave}. Points represent number of gates obtained in the simulation.  Red solid line shows the theoretical gate scaling given by \eqref{eq:g1d} for Trotter order $p=2$. The black solid line shows number of gates from reference \cite{suau2021practical}. Dashed lines are fit to the simulation data with \eqref{eq:approx}.} 
\label{fig:number_gates_10-5}
\end{figure}

The gate complexity of our algorithm for solving the wave equation is calculated as 
\begin{equation}
g_{1d} = O\left(5^{p} t n^2 N^2 \left( \frac{t n N}{\epsilon} \right)^{1/p}\right),
\label{eq:g1d}
\end{equation}
using $\norm{H} \leq |\frac{4 c_{\max}}{h}| = O(N)$ and for $p=2$ is shown with a solid line in Figure \ref{fig:number_gates_10-5}. The dashed lines in Figure \ref{fig:number_gates_10-5} represent manual approximation to experimental points with
\begin{equation}
\begin{aligned}
& g_{a} = \gamma N^{\nu} \log(N)^{\mu} = \gamma N^{\nu} n^{\mu}, \;\; \text{with}\\
&\;\;\; \nu = 1.5+1/p, \;\; \mu = 2+1/p, \\
\end{aligned}
\label{eq:approx}
\end{equation}
and the constant factor
\begin{equation*}
\begin{aligned}
&\gamma = 2~5^{p/2 - 1} 10^{5/p}, \;\; & p = 2, 4, \\
&\gamma = ~~5^{p/2 - 1} 10^{5/p}, \;\; & p = 6.
\end{aligned}
\end{equation*}

The actual number of gates $g_a$ scales better than the theoretical one $g_{1d}$ in formula \eqref{resulted_gates}. This is due to the factor $\nu = 1.5+1/p < 2$ in the exponent of $2^n=N$, the number of the discretization points. The constant scaling factor is about $\sim 100$, the implementation is economical. Note that due to the two-component Hamiltonian used in the solution \eqref{eq:hamil_wave} of the Schr{\"{o}}dinger equation the actual number of qubits is $n+1$.

Theoretical gate complexity for the approximation error $\epsilon$ using an oracle is calculated in \cite{suau2021practical} and is shown in \ref{fig:number_gates_10-5} with a solid black line. In this reference authors show oracle implementation of the algorithm presented in \cite{Costa, Berry} with the gate complexity given by $O\left(5^{p} t n^2 N \left( \frac{t N}{\epsilon} \right)^{1/p}\right)$. Here the scaling factor is $\sim O(N^{1+1/p})$, which is better than ours ($\nu=1.5+1/p$ in \eqref{eq:approx}) but with a constant factor of $\sim 300000$. 

Therefore, when comparing our implementation with oracle based for Trotter order $p=2$ we obtain smaller gate count for number of qubits $n<15$.
The oracle based algorithm is also using $\ge 2n$ qubits to implement oracles and thus is less economical than our approach.

\section{Conclusion}
We have presented an effective procedure for decomposition of a $N\times N$ tridiagonal matrix where $N=2^n$ into $2n+1$ subsets of commuting Pauli strings. Each of these subsets has $2^n$ Pauli strings of length $n$ in a general case. Significant for applications is the decomposition of a hermitian matrix consisting of two real tridiagonal matrices of the given type on the anti-diagonal. 
For such matrix there are $n+1$ internally commuting subsets with $2^n$ Pauli strings of length $n$ each as shown in corollary \ref{lemma_Hermitian}. The suggested decomposition procedure considers only non-zero Pauli strings candidates, therefore improving on the brute-force method which examines all $4^{n+1}$ possible Pauli strings. To our knowledge, this work on decomposition of specifically tridiagonal matrices is novel.

This advantage shows up in the calculation of the decomposition coefficients (weights, Section \ref{subsec:dec_weight}), because only the potentially non-zero Pauli strings participate in the evaluation. For the hermitian matrix mentioned above there are $O(N \log{N})$ binary multiplications in the evaluation of one expansion coefficient: we need to compute $O(1)$ products of bit strings with length $O(\log{N})$ for each term (see \eqref{eq:alpha_off_B}).
This compares favorably with the brute force approach using trace and matrix multiplications with complexity $O(N^3)$. The Pauli matrices are sparse, which reduces the complexity of the brute force method to $O(N^2)$ (e.g. \cite{yuster2005fast}). Still, the presented decomposition procedure has exponentially fewer multiplications, which, to our knowledge, is also novel result.

An additional advantage of the presented decomposition is the automatic availability of the commuting subsets. Each of these subsets can be simultaneously diagonalized. This presents an opportunity for complexity reduction when evaluating the Hamiltonian.

For a practical demonstration of the proposed decomposition procedure we constructed a circuit for Hamiltonian simulation using an example of the one-dimensional wave equation (Section \ref{sec:ham_sim}). It is a case where a tridiagonal matrix naturally arises in the discretization of the differential equation, while the Hamiltonian is of the type considered above. There are have been numerous studies of this example in the literature, in particular there is an implementation using an oracle for the Hamiltonian evolution \cite{suau2021practical}. 

Our main result is captured in fig. \ref{fig:number_gates_10-5} where we show that the computational complexity, specifically the number of gates needed to reach the accuracy of $10^{-5}$ scales better than a theoretical estimate. It is also better than the oracle implementation \cite{suau2021practical} for small circuits $n<15$ with an additional advantage that the number of qubits needed to implement the Hamiltonian evolution with the presented method is by a factor of two smaller, than the oracle approach for any size matrix.

We believe that our method can be applied to the $5$-, $7$- and more-diagonal matrices which arise in the discretization of differential in $2D$ and $3D$. Therefore, this will be a natural extension of our study. 

We have formulated our results using Walsh operators which lift to a map on boolean strings. Similar results can be obtained by using Pauli strings directly. However, we believe, using this approach, the propositions on commuting sets and the other relevant results can be expressed algebraically in a concise and simple form.

Finally, we believe that the found commuting subsets are minimal, however a rigorous prove is deferred to future work. 

The Python code for the numerical experiment presented in Figure \ref{fig:number_gates_10-5} is available on \href{https://github.com/barseniev/tridiagonal-matrix-decomposition-quantum-simulation}{GitHub} \cite{github_page}.

\section{Acknowledgements}
\textit{Competing interests.} The authors declare no competing interests. 
\textit{Author contributions.} All authors conceived and developed the theory and design of this study and verified the methods.
The authors acknowledge the use of Skoltech’s Zhores supercomputer \cite{zhores} for obtaining the numerical results presented in this paper.
\bibliography{bibliography}

\pagebreak

\onecolumngrid

\appendix

\section{Proofs} \label{sec:appendix_proof}

\subsection{Formulae for decomposition coefficients} 
\label{app: sec: dec_formulae}

We call matrix $B$ an upper $l$-diagonal matrix if it has the following form

\begin{equation}
      B = 
\begin{pmatrix}
0 &   \dots & a_0 &  0 & \dots& 0 \\
\vdots & 0 &  \dots & a_1 & \vdots &  \vdots  \\ 
\vdots & \vdots & \ddots & \vdots & \ddots & \vdots \\
\vdots & \vdots & \vdots & \ddots & \vdots & a_{2^n-1-l} \\
\vdots & \vdots &  \vdots & \vdots &\ddots&  \vdots &  \\
0 & \dots &  \dots & \dots &\dots&  0 &  \\
\end{pmatrix}
\end{equation}

and can be written as
\begin{equation}
 \label{eq: l-diagonal}
    B = \sum_{k=0}^{2^n-1-l} a_k \ketbra{\Bin{k}} {\Bin{k+l}} = \sum_{p,q=0}^{2^n-1-l} a_p \delta_{\Bin{p+l},\Bin{q}}\ketbra{\Bin{p}} {\Bin{q}} .
\end{equation}

Similarly, lower $l$-diagonal matrix is introduced as transposed upper $l$-diagonal matrix thus we will not consider this case separately and limit ourselves to upper $l$-diagonal matrix. We also note that Pauli strings present in the decomposition of some matrix $B$ will be the same for $B^\top$ because all Pauli matrices except $Y$ are symmetric and $Y^\top = -Y$, so after transposition Pauli strings in decomposition will be the same but coefficients may change a bit.

\begin{restatable}{proposition}{lemmaCondition}
\label{lem: lem 1}
    Let $B$ be an upper $l$-diagonal matrix. If a Pauli string $P$ enters the Pauli string decomposition of matrix $B\in  \mathbb{C}^{2^n \cross 2^n}$ non-trivially then $\exists p \in \{0, \dots, 2^{n}-1-l\}$:
    \begin{equation}\label{eq: band}
        \Bin{p+l} = \Bin{p}\oplus \mathbf{x} ,
    \end{equation}
    where $\mathbf{x}$ and $\mathbf{z}$ is such that $P = \hat{W}(\mathbf{x},\mathbf{z})$. 
\end{restatable}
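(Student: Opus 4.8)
The plan is to extract the decomposition coefficient of $P=\hat{W}(\mathbf{x},\mathbf{z})$ via the Hilbert-Schmidt inner product and argue that its only possible nonzero contributions come from matrix positions where the support of $\hat{W}(\mathbf{x},\mathbf{z})$ overlaps that of $B$. Since distinct Pauli strings are orthogonal under the trace inner product, with $\mathrm{Tr}(\hat{W}(\mathbf{a},\mathbf{b})^\dagger\hat{W}(\mathbf{x},\mathbf{z}))=2^n\delta_{\mathbf{a},\mathbf{x}}\delta_{\mathbf{b},\mathbf{z}}$, the coefficient in \eqref{eq: decomposition} is recovered as $\beta_{\mathbf{x},\mathbf{z}}=\mathrm{Tr}(\hat{W}(\mathbf{x},\mathbf{z})^\dagger B)$. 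The phrase ``$P$ enters non-trivially'' means precisely $\beta_{\mathbf{x},\mathbf{z}}\neq 0$, so I will prove the contrapositive: if no index $p$ satisfies \eqref{eq: band}, then $\beta_{\mathbf{x},\mathbf{z}}=0$.

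First I would determine the matrix elements of $\hat{W}(\mathbf{x},\mathbf{z})$ in the computational basis. Combining \eqref{eq:action_Z_on_ket} and \eqref{eq:action_X_on_ket} gives $\hat{W}(\mathbf{x},\mathbf{z})\ket{\mathbf{q}}=i^{\mathbf{x}\cdot\mathbf{z}}(-1)^{\mathbf{z}\cdot\mathbf{q}}\ket{\overline{\mathbf{q}}^{\mathbf{x}}}$, and since under the exponentiation convention of Table \ref{tab:definitions_bit_strings} the string $\overline{\mathbf{q}}^{\mathbf{x}}$ is component-wise $q_l\oplus x_l$, the resulting ket is $\ket{\mathbf{q}\oplus\mathbf{x}}$. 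Hence $\bra{\mathbf{p}}\hat{W}(\mathbf{x},\mathbf{z})\ket{\mathbf{q}}$ is nonzero only when $\mathbf{p}=\mathbf{q}\oplus\mathbf{x}$, equivalently $\mathbf{p}\oplus\mathbf{q}=\mathbf{x}$. The crucial observation is that this support pattern depends only on $\mathbf{x}$ and not on $\mathbf{z}$, the $\mathbf{z}$-dependence being a pure phase; this is exactly why the statement constrains only $\mathbf{x}$.

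Next, from the explicit form \eqref{eq: l-diagonal}, the nonzero entries of $B$ sit exactly at positions $(\Bin{p},\Bin{p+l})$ with $p\in\{0,\dots,2^n-1-l\}$. Expanding the trace as $\beta_{\mathbf{x},\mathbf{z}}=\sum_{\mathbf{p},\mathbf{q}}\overline{\hat{W}(\mathbf{x},\mathbf{z})_{\mathbf{p},\mathbf{q}}}\,B_{\mathbf{p},\mathbf{q}}$, a summand can be nonzero only at a position lying in the support of both matrices. This forces $\mathbf{p}=\Bin{p'}$, $\mathbf{q}=\Bin{p'+l}$ for some valid $p'$, together with $\mathbf{p}\oplus\mathbf{q}=\mathbf{x}$, i.e. $\Bin{p'+l}=\Bin{p'}\oplus\mathbf{x}$, which is \eqref{eq: band} with $p:=p'$. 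Consequently, if \eqref{eq: band} fails for every admissible $p$, then every summand vanishes and $\beta_{\mathbf{x},\mathbf{z}}=0$, proving the contrapositive.

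I do not expect a genuinely hard step; the argument is a support-overlap count and the symmetry of $\oplus$ makes the row/column orientation immaterial to the final equation. The only points demanding care are the bookkeeping that identifies $\overline{\mathbf{q}}^{\mathbf{x}}$ with $\mathbf{q}\oplus\mathbf{x}$ under the chosen exponentiation convention, and confirming that the off-diagonal displacement of $B$ is read off as the same XOR displacement $\mathbf{x}$ that governs the support of $\hat{W}(\mathbf{x},\mathbf{z})$.
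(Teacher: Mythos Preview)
Your proposal is correct and follows essentially the same route as the paper: both compute $\beta_{\mathbf{x},\mathbf{z}}$ via the trace inner product, use \eqref{eq:action_Z_on_ket}--\eqref{eq:action_X_on_ket} to identify the support of $\hat{W}(\mathbf{x},\mathbf{z})$ as the set of positions $(\mathbf{p},\mathbf{q})$ with $\mathbf{p}\oplus\mathbf{q}=\mathbf{x}$, and then read off from \eqref{eq: l-diagonal} that a nonzero contribution forces $\Bin{p+l}=\Bin{p}\oplus\mathbf{x}$ for some $p$. The only cosmetic differences are that the paper writes $\Tr(B\,\hat{W}(\mathbf{x},\mathbf{z}))$ rather than $\Tr(\hat{W}(\mathbf{x},\mathbf{z})^\dagger B)$ (immaterial since Pauli strings are Hermitian) and carries the computation through to an explicit closed formula for $\beta_{\mathbf{x},\mathbf{z}}$ before drawing the conclusion, whereas you phrase the same step as a contrapositive support-overlap argument.
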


\begin{proof}
    Let $B \in \mathbb{C}^{2^n \cross 2^n}$, then decomposition into standard basis may be written as
    \begin{equation}
        B = \sum_{\mathbf{p},\mathbf{q}\in\mathbb{B}^n} b_{\mathbf{p},\mathbf{q}}\ketbra{\mathbf{p}}{\mathbf{q}}.
    \end{equation}
    On the other hand, decomposition into Pauli basis in $\mathbb{C}^{2^n \cross 2^n}$ takes form
\begin{equation}\label{eq: decomposition_app}
    B = \frac{1}{2^n}\sum_{\mathbf{x},\mathbf{z}\in\mathbb{B}^n}\beta_{\mathbf{x},\mathbf{z}}\hat{W}(\mathbf{x},\mathbf{z}),
\end{equation}
where coefficients $\beta_{\mathbf{x},\mathbf{z}} \in \mathbb{C}$. These coefficients can be found by taking the inner product of matrix $B$ and $\hat{W}$ and using formulae \eqref{eq:action_Z_on_ket} and \eqref{eq:action_X_on_ket} from the main text.
\begin{align} \label{eq: decomp_coef}
\begin{split}
    \beta_{\mathbf{x},\mathbf{z}} &= \Tr(B\hat{W}(\mathbf{x},\mathbf{z}))
    =
    \Tr(\sum_{\mathbf{p},\mathbf{q}\in\mathbb{B}^n} b_{\mathbf{p},\mathbf{q}}\ketbra{\mathbf{p}}{\mathbf{q}}\imath^{\mathbf{x}\cdot\mathbf{z}}X^\mathbf{x}Z^\mathbf{z})
    =
    \sum_{\mathbf{s}\in\mathbb{B}^n}\bra{\mathbf{s}} \left( \sum_{\mathbf{p},\mathbf{q}\in\mathbb{B}^n}b_{\mathbf{p},\mathbf{q}}\ketbra{\mathbf{p}}{\mathbf{q}}\imath^{\mathbf{x}\cdot\mathbf{z}}X^\mathbf{x}Z^\mathbf{z} \right) \ket{\mathbf{s}}
    \\
    & = \sum_{\mathbf{s}\in\mathbb{B}^n}\sum_{\mathbf{p},\mathbf{q}\in\mathbb{B}^n}\bra{\mathbf{s}}b_{\mathbf{p},\mathbf{q}}\ketbra{\mathbf{p}}{\mathbf{q}}\imath^{\mathbf{x}\cdot\mathbf{z}}(-1)^{\mathbf{z}\cdot\mathbf{s}}\ket{\mathbf{\overline{s}^\mathbf{x}}} 
    =
    \sum_{\mathbf{s}\in\mathbb{B}^n}\sum_{\mathbf{p},\mathbf{q}\in\mathbb{B}^n}b_{\mathbf{p},\mathbf{q}}\braket{\mathbf{s}}{\mathbf{p}}\braket{\mathbf{q}}{\mathbf{\overline{s}^\mathbf{x}}}\imath^{\mathbf{x}\cdot\mathbf{z}}(-1)^{\mathbf{z}\cdot\mathbf{s}}
    \\
    & = \sum_{\mathbf{s}\in\mathbb{B}^n}\sum_{\mathbf{p},\mathbf{q}\in\mathbb{B}^n}b_{\mathbf{p},\mathbf{q}}\delta_{\mathbf{s},\mathbf{p}}\delta_{\mathbf{q},\mathbf{\overline{s}^\mathbf{x}}}\imath^{\mathbf{x}\cdot\mathbf{z}}(-1)^{\mathbf{z}\cdot\mathbf{s}} 
    =
    \sum_{\mathbf{p},\mathbf{q}\in\mathbb{B}^n} b_{\mathbf{p},\mathbf{q}}\delta_{\mathbf{q},\overline{\mathbf{p}}^\mathbf{x}}\imath^{\mathbf{x} \cdot \mathbf{z}}(-1)^{\mathbf{z} \cdot \mathbf{p}} 
    = 
    \sum_{\mathbf{p}\in\mathbb{B}}\imath^{\mathbf{x}\cdot\mathbf{z}}(-1)^{\mathbf{z}\cdot\mathbf{p}}b_{\mathbf{p},\overline{\mathbf{p}}^\mathbf{x}}.
    \end{split}
\end{align}
Since from \eqref{eq: l-diagonal} we have
\begin{equation}
    B = \sum_{p,q=0}^{2^n-1-l} a_p \delta_{\Bin{p+l},\Bin{q}}\ketbra{\Bin{p}} {\Bin{q}},
\end{equation}
we can obtain $b_{\mathbf{p},\mathbf{q}}$ by equating the coefficients for $\ketbra{\mathbf{p}}{\mathbf{q}}$ and $\ketbra{\Bin{p}} {\Bin{q}}$:
\begin{equation}\label{eq: band_el}
    b_{\mathbf{p},\mathbf{q}}= b_{\Bin{p},\Bin{q}} =  a_p \delta_{\Bin{p+l},\Bin{q}}.
\end{equation}
Substituting this expression into \eqref{eq: decomp_coef}, we obtain final formula for the coefficients

\begin{align}
    \begin{split}
    \beta_{\mathbf{x},\mathbf{z}}  = \sum_{p=0}^{2^n-1-l}\imath^{\mathbf{x}\cdot\mathbf{z}}(-1)^{\mathbf{z}\cdot\Bin{p}}\delta_{\Bin{p+l},\overline{\Bin{p}}^\mathbf{x}}a_p.
            \end{split}
  \end{align}
  
Hence for $\beta_{\mathbf{x},\mathbf{z}} \neq 0$ there should exist a solution $p \in \{0, \dots, 2^{n}-1-l\}$ to the following equation
\begin{equation}
    \Bin{p+l} = \overline{\Bin{p}}^\mathbf{x} = \Bin{p}\oplus\Bin{x}.
\end{equation}
\end{proof}

\subsection{Proof of proposition  \ref{theor: sets} for decomposition of general tridiagonal matrix} 
\label{app: sec: theorem_general}

\lemmacommutationA*
\begin{proof}
\begin{align}
\begin{split}   
    PQ &= \hat{W}(\mathbf{x},\mathbf{z})\hat{W}(\mathbf{a},\mathbf{b}) \\
    &= \bigotimes_{l=1}^{n}  \imath^{x_l\cdot z_l} X^{x_l}Z^{z_l} \imath^{a_l\cdot b_l} X^{a_l}Z^{b_l}  \\
    &= \bigotimes_{l=1}^{n}  \imath^{x_l\cdot z_l} \imath^{a_l\cdot b_l} (-1)^{z_l\cdot a_l} X^{x_l} X^{a_l} Z^{z_l} Z^{b_l}\\
    &= \bigotimes_{l=1}^{n}  \imath^{x_l\cdot z_l} \imath^{a_l\cdot b_l} (-1)^{z_l\cdot a_l} X^{a_l} X^{x_l} Z^{b_l} Z^{z_l}\\
    &= \bigotimes_{l=1}^{n}  \imath^{x_l\cdot z_l} \imath^{a_l\cdot b_l} (-1)^{z_l\cdot a_l} (-1)^{x_l\cdot b_l} X^{a_l} Z^{b_l} X^{x_l} Z^{z_l} \\
    &= (-1)^{\mathbf{z}\cdot\mathbf{a}}(-1)^{\mathbf{x}\cdot\mathbf{b}} \bigotimes_{l=1}^{n}  \imath^{x_l\cdot z_l} \imath^{a_l\cdot b_l} X^{a_l} Z^{b_l} X^{x_l} Z^{z_l} \\
    &= (-1)^{\mathbf{z}\cdot\mathbf{a}}(-1)^{\mathbf{x}\cdot\mathbf{b}} \hat{W}(\mathbf{a},\mathbf{b})\hat{W}(\mathbf{x},\mathbf{z}) \\
    &= (-1)^{\mathbf{z}\cdot\mathbf{a}}(-1)^{\mathbf{x}\cdot\mathbf{b}} QP
\end{split}
\end{align}

Thus we see that two Pauli strings $P$ and $Q$ commute if $\mathbf{z}\cdot\mathbf{a}+\mathbf{x}\cdot\mathbf{b} = 0 \Mod{2}$.

\end{proof}

\corcommutation*
\begin{proof}

We denote the number of $Y$ operators in Pauli string $P$ as $N_Y(P)$. Let $P$ correspond to $\hat{W}(\mathbf{x},\mathbf{z})$ 
\begin{equation}
    \hat{W}(\mathbf{x}, \mathbf{z}) = \bigotimes_l \imath^{x_l z_l}X^{x_l}Z^{z_l}.
\end{equation}
Note that if and only if for some $l,$ $x_l = z_l = 1$ then 
\begin{equation}
    \imath^{x_l z_l}X^{x_l}Z^{z_l} = \imath X Z = Y.
\end{equation}
Therefore, the number of all bit positions where $x_l = z_l = 1$ corresponds to the number of $Y$ in the Pauli string i.e.
\begin{equation} \label{eq: Ny}
    N_Y(P) = \mathbf{x}\cdot\mathbf{z}.
\end{equation}
Since we have $\mathbf{x}\cdot\mathbf{z} = \mathbf{x}\cdot\mathbf{b} \Mod{2}$ from our assumption, then from proposition \ref{lem: lemma_commutation} (here $\mathbf{a}=\mathbf{x}$) a commutation of $P$ and $Q$ follows.
Which means \eqref{eq: commutation} holds and therefore $P$ and $Q$ commute.

\end{proof}

\decompositionSets* 
\begin{proof}

A Tridiagonal matrix decoposition comprises of terms for the diagonal ($l=0$), upper $1$-diagonal and lower $1$-diagonal (that is transposed upper $1$-diagonal). To obtain upper bounds on the number of Pauli strings in the decomposition of $B$ it suffices to check the number of pairs $(\mathbf{x},\mathbf{z})$ that satisfy the necessary conditions from proposition \ref{lem: lem 1}.

\begin{enumerate}
\item For $l=0$ in \eqref{eq: band} we get 
    \begin{gather}
        LHS = \Bin{p+0} = ( p_1,\dots, p_n), \\
        RHS = \overline{\Bin{p}}^\mathbf{x} = (\overline{p}_1^{x_1},\dots,\overline{p}_n^{x_n}).
    \end{gather}
    
    Thus, for a Pauli string to satisfy the necessary conditions $\mathbf{x}=(0,\dots,0)$ and $\mathbf{z}$ can be arbitrary.
    Therefore, the maximum number of Pauli strings in the decomposition of diagonal matrix is bounded by $2^n$.
    
\item For $l=1$ in \eqref{eq: band} using binary summation rules (recall that leftmost bit encodes lowest register) we get
    \begin{equation}
        \Bin{p+1} = ( \overline{p_1}, p_2^{\overline{p_1}},p_3^{\overline{p_1}\overline{p_2}},\dots ).
    \end{equation}
    Since $x^p = x \oplus \overline{p}$, we have
    \begin{align*}
        (\Bin{p+1})_1 &= \overline{p}_1, \\
        (\Bin{p+1})_j &= p_j^{\overline{\prod_{k=1}^{j-1} p_k}} = p_j\oplus\prod_{k=1}^{j-1} p_k, \;\; j>1.
    \end{align*}
    Now we can write \eqref{eq: band} in the following form
    \begin{equation}
        p_j\oplus x_j =p_j\oplus\prod_{k=1}^{j-1} p_k, \;\; j>1,
    \end{equation}
    which means 
    \begin{equation} \label{eq: 1-diag eq}
        x_j = \prod_{k=1}^{j-1} p_k
    \end{equation}
    for $j>1$ and $x_1=1$. It follows that for any $p \in \mathbb{B}^n$ if $p_j=0$ then $x_{j+1}, \dots, x_n$ are equal to $0$.
    Therefore, $x$ can only be one of the following strings:
    \begin{align} \label{eq: 1-diagonal solution}
        \begin{split}
            &(1,0,\dots,0), \\
            &(1,1,\dots,0),\\
            &\vdots \\
            &(1,1,\dots,1).
        \end{split} 
    \end{align}
\end{enumerate}
Therefore, the maximum number of Pauli strings in the decomposition of upper $1$-diagonal matrix is bounded by $n2^n$.

In conclusion, adding the upper bounds for the diagonal, upper and lower $1$-diagonal cases we obtain that the number of terms in the decomposition of $B$ is upper bounded by $(n+1)2^n$ because Pauli strings for upper and lower $1$-diagonal matrices will be the same as was mentioned before. Each of the above strings corresponds to a particular set in the decomposition of $B$ provided in the proposition.
\end{proof}

\subsection{Proof of proposition \ref{theor: real_tridiag_mat} for decomposition of real tridiagonal matrix} \label{app: sec: theorem_real}

\realtridB*
\begin{proof}
The condition that the matrix B is real can be expressed as:
\begin{equation}\label{eq: real_valB}
    B = B^*,
\end{equation}
where $*$ means complex conjugation. From definitions \eqref{eq: decomposition_app} and \eqref{eq: real_valB} it follows:
\begin{align}
    \begin{split}
        LHS &= \frac{1}{2^n}\sum_{\mathbf{x},\mathbf{z}\in\mathbb{B}^n} \beta_{\mathbf{x},\mathbf{z}}\hat{W}(\mathbf{x},\mathbf{z}) \\
        &= \frac{1}{2^n}\sum_{\mathbf{x},\mathbf{z}\in\mathbb{B}^n} \beta_{\mathbf{x},\mathbf{z}}\imath^{\mathbf{x}\cdot\mathbf{z}}X^\mathbf{x}Z^\mathbf{z}, \\
        RHS &= \frac{1}{2^n}\sum_{\mathbf{x},\mathbf{z}\in\mathbb{B}^n} \beta_{\mathbf{x},\mathbf{z}}^*\hat{W}^*(\mathbf{x},\mathbf{z}) \\
        &= \frac{1}{2^n}\sum_{\mathbf{x},\mathbf{z}\in\mathbb{B}^n} \beta_{\mathbf{x},\mathbf{z}}^*(-\imath)^{\mathbf{x}\cdot\mathbf{z}}X^\mathbf{x}Z^\mathbf{z} \\
        &= \frac{1}{2^n}\sum_{\mathbf{x},\mathbf{z}\in\mathbb{B}^n} \beta_{\mathbf{x},\mathbf{z}}^*(-1)^{\mathbf{x}\cdot\mathbf{z}}\imath^{\mathbf{x}\cdot\mathbf{z}}X^\mathbf{x}Z^\mathbf{z}  \\
        &= \frac{1}{2^n}\sum_{\mathbf{x},\mathbf{z}\in\mathbb{B}^n} \beta_{\mathbf{x},\mathbf{z}}^*(-1)^{\mathbf{x}\cdot\mathbf{z}}\hat{W}(\mathbf{x},\mathbf{z}).
    \end{split}
\end{align}

By comparing $LHS$ and $RHS$ one can obtain conditions on coefficients $\beta_{\mathbf{x},\mathbf{z}}$:
\begin{equation} \label{eq: real_condition}
    \beta_{\mathbf{x},\mathbf{z}} = \beta^*_{\mathbf{x},\mathbf{z}}(-1)^{\mathbf{x}\cdot\mathbf{z}}.
\end{equation}

Thus, $\beta_{\mathbf{x},\mathbf{z}}$ has only the real part if $\mathbf{x}\cdot\mathbf{z} = 0 \Mod{2}$ or has only imaginary part if $\mathbf{x}\cdot\mathbf{z} = 1 \Mod{2}$.

From Corollary \ref{cor: pauli commutation}, Pauli strings commute if the parity of the number of $Y$ operators $N_Y = \mathbf{x}\cdot\mathbf{z}$ in each string is the same. Hence, we can partition the set of Pauli strings in terms of $\mathbf{x}$ in proposition \ref{theor: sets} into commuting subsets due to the fact that for every string $\mathbf{x}$, there are two sets of $\mathbf{z}$ with equal cardinality $2^{n-1}$ that result in $N_Y$ being zero or one modulo 2. However, for $\mathbf{x}=(0,\dots,0)$, any $\mathbf{z}$ will not change $N_Y$. Consequently, we will end up with $2n$ subsets, each with cardinality $2^{n-1}$, and one subset with $2^n$ elements.
\end{proof}

\realsymdecomposition*

\begin{proof}
Let's recall the decomposition of $B$:

\begin{equation}
    B = \frac{1}{2^n}\sum_{\mathbf{x},\mathbf{z}\in\mathbb{B}^n} \beta_{\mathbf{x},\mathbf{z}}\hat{W}(\mathbf{x},\mathbf{z}).
\end{equation}

Since $X^\top = X$, $Z^\top=Z$ and $XZ = -ZX$  for an arbitrary Pauli operator $\hat{W}(\mathbf{x},\mathbf{z})$ we can write 
\begin{equation}
    \hat{W}^\top(\mathbf{x},\mathbf{z}) = \imath^{\mathbf{x}\cdot\mathbf{z}}Z^\mathbf{z}X^{\mathbf{x}} = 
    \imath^{\mathbf{x}\cdot\mathbf{z}} (-1)^{\mathbf{x}\cdot\mathbf{z}}X^\mathbf{x}Z^\mathbf{z}=(-1)^{\mathbf{x}\cdot\mathbf{z}}\hat{W}(\mathbf{x},\mathbf{z}).
\end{equation}
Since matrix $B$ is symmetric
\begin{equation}
    B = B^\top = \frac{1}{2^n}\sum_{\mathbf{x},\mathbf{z}\in\mathbb{B}^n} \beta_{\mathbf{x},\mathbf{z}}\hat{W}^\top(\mathbf{x},\mathbf{z}) = 
    \frac{1}{2^n}\sum_{\mathbf{x},\mathbf{z}\in\mathbb{B}^n} \beta_{\mathbf{x},\mathbf{z}}(-1)^{\mathbf{x}\cdot\mathbf{z}}\hat{W}(\mathbf{x},\mathbf{z}).
\end{equation}
 Comparing $RHS$ and $LHS$ we obtain a condition for coefficients $\beta_{\mathbf{x},\mathbf{z}}$:
 \begin{equation}
     \beta_{\mathbf{x},\mathbf{z}} = \beta_{\mathbf{x},\mathbf{z}}(-1)^{\mathbf{x}\cdot\mathbf{z}},
 \end{equation}
 which can be satisfied for non-trivial $\beta_{\mathbf{x},\mathbf{z}}$ only when
 \begin{equation} \label{eq: dag_condition}
     \mathbf{x}\cdot\mathbf{z} = 0 \Mod{2}.
 \end{equation}
 
If $\mathbf{x}=0$, condition \eqref{eq: dag_condition} automatically satisfied, for each of the $n$ remaining possible strings $\mathbf{x}$ only half of the possible $\mathbf{z}$ strings satisfy \eqref{eq: dag_condition} and thus form subsets of size $2^{n-1}$.
 
 Also it follows from \eqref{eq: dag_condition} that the number of $Y$ operators $N_Y$ must be even. Therefore the decomposition of real symmetric tridiagonal matrix consists only of subsets $S_{m, +}$. Thus, we have $n$ subsets $S_{m, +}$ of size $2^{n-1}$ and one subset $S_0$ of size $2^n$.
\end{proof}

\lemmaHermitianA*
\begin{proof}
The Hermitian matrix $H$ is expressed in terms of matrix $B$: 
\begin{equation}
    H =
    \begin{pmatrix}
    0 & B\\
    B^\dagger & 0\\
    \end{pmatrix},
\end{equation}
Let us introduce the following matrices
\begin{align}
    \begin{split}
    E_{12} = \begin{pmatrix}
        0 & 1 \\
        0 & 0
    \end{pmatrix} = \frac{1}{2}(X-XZ), \\
    E_{21} = \begin{pmatrix}
    0 & 0 \\
    1 & 0
\end{pmatrix} = \frac{1}{2}(X+XZ).
    \end{split}
\end{align}
Then $H$ can be constructed as
\begin{equation}
\begin{split}
    H & = E_{12} \otimes B  + E_{21} \otimes B^{\dagger} \\
    & = \frac{1}{2}(X-XZ)\otimes B  + \frac{1}{2}(X+XZ) \otimes B^{\dagger} \\
    & = \frac{1}{2} [ X \otimes \left( B + B^{\dagger} \right) - XZ \otimes \left( B - B^{\dagger}\right)] \\
    & = \frac{1}{2^n} X \otimes \left(\sum_{\mathbf{x},\mathbf{z}\in\mathbb{B}^n} \frac{\beta_{\mathbf{x},\mathbf{z}} + \beta^*_{\mathbf{x},\mathbf{z}}}{2} \hat{W}(\mathbf{x},\mathbf{z}) \right) - 
    \frac{1}{2^n} XZ \otimes \left(\sum_{\mathbf{x},\mathbf{z}\in\mathbb{B}^n} \frac{\beta_{\mathbf{x},\mathbf{z}} - \beta^*_{\mathbf{x},\mathbf{z}}}{2} \hat{W}(\mathbf{x},\mathbf{z})  \right) \\
    & = \frac{1}{2^n} X \otimes \left(\sum_{\mathbf{x},\mathbf{z}\in\mathbb{B}^n} \Re{\beta_{\mathbf{x},\mathbf{z}}}\hat{W}(\mathbf{x},\mathbf{z}) \right) - 
    \frac{1}{2^n} \imath XZ  \otimes \left(\sum_{\mathbf{x},\mathbf{z}\in\mathbb{B}^n} \Im{\beta_{\mathbf{x},\mathbf{z}}}\hat{W}(\mathbf{x},\mathbf{z})\right) \\
    & = \frac{1}{2^n} \sum_{\mathbf{x},\mathbf{z}\in\mathbb{B}^n} \left( \Re{\beta_{\mathbf{x},\mathbf{z}}} X - \imath\Im{\beta_{\mathbf{x},\mathbf{z}}} XZ \right) \otimes \hat{W}(\mathbf{x},\mathbf{z}) =\\
    & = \frac{1}{2^n} \sum_{\mathbf{x},\mathbf{z}\in\mathbb{B}^n} \left( \Re{\beta_{\mathbf{x},\mathbf{z}}} X - \Im{\beta_{\mathbf{x},\mathbf{z}}} Y \right) \otimes \hat{W}(\mathbf{x},\mathbf{z}).
\end{split}
\label{proof_H_terms}
\end{equation}

Since matrix $B$ is real valued then according to \eqref{eq: real_condition} 
\begin{equation}
    \begin{cases}
        \Im{\beta_{\mathbf{x},\mathbf{z}}} = 0  , \text{if } \mathbf{x}\cdot\mathbf{z} = 0 \Mod{2}, \\
        \Re{\beta_{\mathbf{x},\mathbf{z}}} = 0,  \text{if } \mathbf{x}\cdot\mathbf{z} = 1 \Mod{2}.
    \end{cases}
\end{equation}

Now in the equation \eqref{proof_H_terms} since either $\Re{\beta_{\mathbf{x},\mathbf{z}}}=0$ or $\Im{\beta_{\mathbf{x},\mathbf{z}}}=0$, for $\mathbf{x},\mathbf{z}\in\mathbb{B}^n$ the number of terms in the decomposition of $H$ is the same as that for $B$. 
From \eqref{proof_H_terms} and the decomposition in proposition \ref{theor: sets} it follows that
\begin{align*} \label{eq:H_decomposition}
    S_0 &= \{X\} \bigotimes_1^{n} \{I,Z\} &(m=0)\\
    S_m &= \{\widehat{X,Y}\} \bigotimes_1^{n-m} \{I,Z\} \bigotimes_1^{m} \{X,Y\}, \;\; &
    m=1, \dots, n-1\\
    S_n &= \{\widehat{X,Y}\} \bigotimes_1^{m} \{X,Y\}, &
    (m=n)
\end{align*}
i.e the commuting subsets are constructed by appending $X$ or $Y$ at the beginning of the Pauli strings in the decomposition to ensure the required parity.
\end{proof}

\end{document}